\newcolumntype{R}[1]{>{\raggedleft\arraybackslash}p{#1}}
\newcolumntype{C}[1]{>{\centering\arraybackslash}p{#1}} %
\newcommand{\ChangeNodeLabel}[3]{%
	\node[\csname nodeType#1\endcsname=\csname nodeSize#1\endcsname,#2] at (node#1) {#3};
}
\newcommand{\ChangeNode}[2]{%
	\node[\csname nodeType#1\endcsname=\csname nodeSize#1\endcsname,#2] at (node#1) {\csname nodeLabel#1\endcsname};
}
\tikzstyle{ST}=[%
	\newcommand{\JO}[2][]{#2}
	\newcommand{\JO}[2][]{#1}
		\newtheorem{theorem}{Theorem}[section]
\newtheorem{Lemma}[theorem]{Lemma}
\newtheorem{Corollary}[theorem]{Corollary}
\crefname{Theorem}{thm.}{thms.}
\crefname{Lemma}{lemma}{lemmas}
\Crefname{Theorem}{Theorem}{Theorems}
\Crefname{Lemma}{Lem.}{Lems.}
\Crefname{Corollary}{Cor.}{Cors.}
\crefname{Corollary}{cor.}{cors.}
\Crefname{equation}{Eq.}{Eqs.}
\crefname{equation}{eq.}{eqs.}
\Crefname{figure}{Fig.}{Figs.}
\crefname{figure}{fig.}{figs.}
\Crefname{section}{Sec.}{Secs.}
\crefname{section}{sec.}{secs.}
\newcommand{\todo}[1]{} %
\newcommand{\block}[1]{\noindent\textbf{#1. }}
\newcommand{\Time}[1]{\ensuremath{t_{\textup{#1}}}}
	 \newcommand{\timeLCE}[1][]{\ensuremath{t_{\textup{LCE}}}}
	 \newcommand{\timeSA}[1][]{\ensuremath{t_{\textup{\SA}}}}
\newcommand{\lcp}[1][]{\UnaryOperator[#1]{\mathit{lcp}}}
\newcommand{\lcs}[1][]{\UnaryOperator[#1]{\mathit{lcs}}}
\newcommand{\LCP}         {\instancename{LCP}}
\newcommand{\PLCP}         {\instancename{PLCP}}
\newcommand{\SA}          {\instancename{SA}}
\newcommand{\occ}          {\ensuremath{\textup{occ}}}
\newcommand{\ISA}          {\instancename{ISA}}
\newcommand{\LPF}          {\instancename{LPF}}
\newcommand{\bv}[1]{\ensuremath{B_{\mathup{#1}}}}
\newcommand{\B}[1]{\mathsf{b}(#1)}%
\title{Computing All Distinct Squares in Linear Time for Integer Alphabets}
	\author[1]{Hideo Bannai}
\author[1]{Shunsuke Inenaga}
\author[2]{Dominik K\"{o}ppl}
\affil[1]{Department of Informatics, Kyushu University, Japan, {\tt inenaga@inf.kyushu-u.ac.jp, bannai@inf.kyushu-u.ac.jp}}
\affil[2]{Department of Computer Science, TU Dortmund, Germany, {\tt dominik.koeppl@tu-dortmund.de}}
\begin{document}
\maketitle{}

\begin{abstract}
Given a string on an integer alphabet, we present an algorithm that computes
the set of all distinct squares belonging to this string in time linear to the string length.
As an application, we show how to compute the tree topology of the minimal augmented suffix tree in linear time.
Asides from that, we elaborate an algorithm computing the longest previous table in a succinct representation using compressed working space.
\end{abstract}

\section{Introduction}
A square is a string of the form $SS$, where $S$ is some non-empty string.
It is well-known that a string of length~$n$ contains at most $n^2/4$ squares. 
This bound is the number of \emph{all} squares, i.e., we count multiple occurrences of the same square, too.
If we consider the number of all \emph{distinct} squares, i.e., we count \emph{exactly one} occurrence of each square,
then it becomes linear in $n$:
The first linear upper bound was given by \citet{FraenkelS98} who proved that
a string of length $n$ contains at most $2n$ distinct squares.
Later, \citet{Ilie07} showed the slightly improved bound of $2n - \Theta(\lg n)$.
Recently, \citet{DezaFT15} refined this bound to $\gauss{11n/6}$.
In the light of these results one may wonder whether future results will ``converge'' to the upper bound of~$n$:
The \emph{distinct square conjecture}~\cite{FraenkelS98,JonoskaMS14StrongerSquare} is that
a string of length $n$ contains at most $n$ distinct squares;
this number is known to be independent of the alphabet size~\cite{ManeaS15SquareDensity}.
However, there still is a big gap between the best known bound and
the conjecture.
While studying a combinatorial problem like this, it is natural to think about ways to actually compute the exact number.

This article focuses on a computational problem on
distinct squares, namely, we wish to compute (a compact representation of)
the set of all distinct squares in a given string.
\citet{GusfieldS04} tackled this problem with an algorithm running in $\Oh{n \sigma_T}$ time, where $\sigma_T$ denotes the number of different characters
contained in the input text $T$ of length~$n$.
Although its running time is optimal $\Oh{n}$ for a constant alphabet,
it becomes $\Oh{n^2}$ for a large alphabet since $\sigma_T$ can be as large as $\Oh{n}$.

We present an algorithm (\Cref{secDistinctSquares}) that computes this set in \Oh{n} time for a given string of length~$n$ over an integer alphabet of size $n^{\Oh{1}}$.
Like \citeauthor{GusfieldS04}, we can use the computed set to decorate the suffix tree with all squares (\Cref{secDecoration}).
As an application, we provide an algorithm that computes the tree topology of the minimal augmented suffix tree~\cite{ApostolicoP96} in linear time (\Cref{secMAST}).
The fastest known algorithm computing this tree topology takes
$\Oh{n \lg n}$ time~\cite{solvingstringstatisticsologn}.

For our approach, we additionally need the longest previous factor table~\cite{FranekHSX03,CrochemoreI08}.
As a side result of independent interest, we show in \Cref{secLPF} how to store this table in $2n + \oh{n}$ bits, and give an algorithm that computes it using compressed working space.

\section{Definitions}\label{secDefinitions}
Our computational model is the word RAM model with word size $\Om{\lg n}$ for some natural number~$n$.
Let $\Sigma$ denote an integer alphabet of size $\sigma = \abs{\Sigma} = n^{\Oh{1}}$.
An element $w$ in $\Sigma^*$ is called a \intWort{string},
and $\abs{w}$ denotes its length.
We denote the $i$-th character of $w$ with $w[i]$, for $1 \le i \le \abs{w}$.
When $w$ is represented by the concatenation of $x, y, z \in \Sigma^*$, i.e., $w = xyz$,
then $x$, $y$ and $z$ are called a \intWort{prefix}, \intWort{substring} and \intWort{suffix} of $w$, respectively.
For any $1 \leq i \leq j \leq |w|$, let $w[i..j]$ denote
the substring of $w$ that begins at position $i$ and ends at position $j$ in $w$.

The \intWort{longest common prefix (LCP)} of two strings is the longest prefix shared by both strings.
The \intWort{longest common extension (LCE)} query asks for the longest common prefix of two suffixes of the \emph{same} string.
The time for an LCE query is denoted by \timeLCE{}.

A \intWort{factorization} of a string $T$ is a sequence of
non-empty substrings of $T$ such that the concatenations of the substrings is $T$.
Each substring is called a \intWort{factor}.

In the rest of this paper, we take a string $T$ of length $n>0$, and call it \intWort{the text}.
We assume that $T[n] = \$$ is a special character that appears nowhere else in $T$,
so that no suffix of $T$ is a prefix of another suffix of $T$. 
We further assume that $T$ is read-only; accessing a character costs constant time.
We sometimes need the \intWort{reverse} of~$T$, which is given by the concatenation $T[n-1] \cdots T[1] \cdot T[n] = T[n-1] \cdots T[1] \$$.

The \intWort{suffix tree} of $T$ is the tree obtained by compacting the trie of all suffixes of $T$;
it has $n$ leaves and at most $n$ internal nodes.
The leaf corresponding to the $i$-th suffix is labeled with~$i$.
Each edge~$e$ stores a string that is called the \intWort{edge label} of $e$.
The \intWort{string label} of a node $v$ is defined as the concatenation of all edge labels on the path from the root to $v$;
the \intWort{string depth} of a node is the length of its string label.

$\SA$ and $\ISA$ denote the suffix array and the inverse suffix array of $T$, respectively~\cite{manber93suffix}.
The access time to an element of $\SA$ is denoted by \timeSA{}.
$\LCP$ is an array such that $\LCP[i]$ is the length of the longest common prefix of $T[\SA[i]..n]$ and $T[\SA[i-1]..n]$ for $i=2,\ldots,n$.
For our convenience, we define $\LCP[1] := 0$.

A \intWort{range minimum query (RMQ)} asks for the smallest value in an integer array for a given range.
There are data structures that can answer an RMQ on an integer array of length~$n$ in constant time while taking $2n+\oh{n}$ bits of space~\cite{fischer11space}.
An LCE query for the suffixes~$T[s..n]$ and~$T[t..n]$ can be answered with an RMQ data structure on \LCP{} with the range $[\min(\ISA[s],\ISA[t])+1..\max(\ISA[s],\ISA[t])]$ in constant time.

A \intWort{bit vector} is a string on a binary alphabet.
A \intWort{select query} on a bit vector asks the position of the $i$-th \bsq{0} or \bsq{1} in the bit vector.
There is a data structure that can be built in $\Oh{n}$ time with $\Oh{n}$ bits of working space such that it takes $\oh{n}$ bits on top of the bit vector,
and can answer a select query in constant time~\cite{clark96compact}.

We identify occurrences of substrings with their position and length in the text, i.e., if $x$ is a substring of $T$, then there is 
a $1 \le i \le n$ and a $0 \le \ell \le n-i+1$ such that $T[i..i+\ell-1] = x$. 
In the following, we will represent the occurrences of substrings by tuples of position and length.
When storing these tuples in a set, we call the set \intWort{distinct}, if there are no two tuples $(i,\ell)$ and $(i',\ell)$
such that $T[i..i+\ell-1] = T[i'..i'+\ell-1]$.
A special kind of substring is a square:
A \intWort{square} is a string of the form $SS$ for $S \in \Sigma^+$;
we call $S$ and \abs{S} the \intWort{root} and the \intWort{period} of the square $SS$, respectively.
Like with substrings, we can generate a set containing some occurrences of squares.
A set of \intWort{all distinct squares} is a distinct set of occurrences of squares that is maximal under inclusion.

\JO{%
To compute a set of all distinct squares is the main focus of this paper.
We will tackle this problem theoretically in \Cref{secDistinctSquares}, and practically in \Cref{secEval}.
Finally, we give two applications of this problem in \Cref{secDecoration} and \Cref{secMAST}.
But before all that, we start with the study of the LPF array needed for our approach computing all distinct squares:
}%

\section{A Compact Representation of the LPF Array}\label{secLPF}

The longest previous factor table \LPF{} of~$T$ is formally defined as
\[
	\LPF[j] := \max\menge{\ell \mid \text{there exists an~} i \in [1..j-1] \text{~such that~} T[i,i+\ell-1] = T[j,j+\ell-1]}.
\]
It is useful for computing the \intWort{Lempel-Ziv factorization} of~$T = f_1 \cdots f_z$, which is defined as
$f_i = T[k..k+\max(1,\LPF[k])]$ with $k:= \sum_{j=1}^{i-1}\abs{f_{j}}+1$ for $1 \le i \le z$.

In the following, we will use the text
$
T =
\ruby{\texttt{a}}{0}%
\ruby{\texttt{b}}{0}%
\ruby{\texttt{a}}{3}%
\ruby{\texttt{b}}{2}%
\ruby{\texttt{a}}{1}%
\ruby{\texttt{a}}{2}%
\ruby{\texttt{a}}{5}%
\ruby{\texttt{b}}{4}%
\ruby{\texttt{a}}{3}%
\ruby{\texttt{b}}{2}%
\ruby{\texttt{a}}{1}%
\ruby{\texttt{\$}}{0}%
$
as our running example
whose \LPF{} array is represented by the small numbers above the characters.
The Lempel-Ziv factorization of $T$ is given by
$
	\RubyReset{}%
	\RubyCount{a}|%
	\RubyCount{b}|%
	\RubyCount{aba}|%
	\RubyCount{aa}|%
	\RubyCount{baba}|%
	\RubyCount{\$}
$, where the small numbers denote the factor indices, and the vertical bars denote the factor borders.
\JO{\Cref{figRunningEx} shows \SA{}, \LPF{} and other used array data structures of our running example.}{}

\JO{%
\begin{figure}[t]
	\floatbox[{\capbeside\thisfloatsetup{capbesideposition={right},capbesidewidth=3cm}}]{figure}[\FBwidth]
	{%
		\caption{The arrays \SA{}, \LCP{}, \PLCP{} and \LPF{} of the running example.}
		\label{figRunningEx}
}{%
		\renewcommand{\familydefault}{\ttdefault}\normalfont
	\tabcolsep=0.1em
		\begin{tabular}{l*{15}{R{1.8em}}}
				$i$        & 1  & 2  & 3 & 4 & 5 & 6 & 7 & 8 & 9  & 10 & 11 & 12
				\\\toprule
				$T$        & {\tt a}  & {\tt b}  & {\tt a} & {\tt b} & {\tt a} & {\tt a} & {\tt a} & {\tt b} & {\tt a}  & {\tt b}  & {\tt a}  & {\tt \$}
				\\\midrule
				$\SA$      & 12 & 11 & 5 & 6 & 9 & 3 & 7 & 1 & 10 & 4  & 8  & 2
				\\\midrule
				$\LCP$     & 0  & 0  & 1 & 2 & 1 & 3 & 3 & 5 & 0  & 2  & 2  & 4
				\\\midrule
				$\PLCP$    & 5  & 4  & 3 & 2 & 1 & 2 & 3 & 2 & 1  & 0  & 0  & 0
				\\\midrule
				$\LPF$     & 0  & 0  & 3 & 2 & 1 & 2 & 5 & 4 & 3  & 2  & 1  & 0
				\\\bottomrule
			\end{tabular}
}%
\end{figure}
}%

\begin{Corollary}\label{corLZ}
	Given \LPF{}, we can compute the Lempel-Ziv factorization in \Oh{n} time.
	If the factorization consists of $z$ factors,
	the factorization can be represented by an array of $z \lg n$ bits, 
	where the $x$-th entry stores the beginning of the $x$-th factor.

	Alternatively, it can be represented by a bit vector of length~$n$ in which we mark the factor beginnings.
	A select data structure on top of the bit vector can return the length and the position of a factor in constant time.
\end{Corollary}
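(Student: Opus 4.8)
The plan is to derive all of the claims from a single greedy left-to-right pass over \LPF{}, and then to equip the resulting bit vector with the constant-time select structure recalled in \Cref{secDefinitions}.

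First I would compute the factorization directly from its defining recurrence. Maintaining a cursor $k$ initialized to $1$, the definition tells us that the factor starting at $k$ has length $\ell_k := \max(1,\LPF[k])$; so I record $k$ as a factor beginning and advance $k \mapsto k + \ell_k$, repeating until $k > n$. Since $\ell_k \ge 1$ for every $k$, the cursor strictly increases and the loop halts after exactly $z \le n$ iterations, each costing \Oh{1} time and one array access. Hence the scan runs in \Oh{n} time and emits the factor beginnings $b_1 < b_2 < \cdots < b_z$ in increasing order; this proves the first claim and pins down $z$.

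For the two storage formats, the array representation is immediate: I store $b_1, \ldots, b_z$ in $z$ cells of $\lg n$ bits each, so the $x$-th cell holds $b_x$, and the length of the $x$-th factor is $b_{x+1} - b_x$ under the convention $b_{z+1} := n+1$. For the bit vector representation I build, during the very same scan, a length-$n$ bit vector $B$ with $B[i] = 1$ exactly when $i$ is some $b_x$; it carries precisely $z$ set bits and occupies $n$ bits.

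Finally I would attach the select data structure of \Cref{secDefinitions} to $B$, which is built in \Oh{n} time and adds only $\oh{n}$ bits on top. A select query for the set bits returns the position $b_x$ of the $x$-th factor in \Oh{1} time, while its length is recovered as the difference between the beginnings of the $(x{+}1)$-th and the $x$-th factor, the boundary case $x=z$ being handled by treating the virtual position $n+1$ as the next beginning. The only points needing care are this last-factor boundary and the harmless off-by-one in the index range of the defining recurrence; neither is a genuine obstacle, so I expect the whole corollary to reduce to checking that the greedy scan faithfully realizes the stated definition of the factorization.
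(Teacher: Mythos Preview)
Your proposal is correct and is exactly the argument the paper has in mind: the corollary is stated in the paper without proof, immediately after the recursive definition $f_i = T[k..k+\max(1,\LPF[k])-1]$ with $k = 1 + \sum_{j<i}\abs{f_j}$, so the intended justification is precisely the greedy left-to-right scan you describe together with the select structure recalled in \Cref{secDefinitions}. Your remark about the off-by-one in the paper's index range is also apt; it is indeed harmless.
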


Since we will need \LPF{} in \Cref{secDistinctSquares}, we are interested in the time and space bounds for computing \LPF{}.
We start with the (to the best of our knowledge) state of the art algorithm with respect to time and space requirements.
\begin{Lemma}[{\cite[Theorem 1]{Crochemore2009LCR}}]\label{lemLPFCrochemore}
Given \SA{} and \LCP{},
we can compute \LPF{} in \Oh{n \timeSA} time.
Besides the output space of $n \lg n$ bits, we only need constant working space.
\end{Lemma}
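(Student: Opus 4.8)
The plan is to reduce $\LPF$ to a nearest-smaller-value computation on $\SA$ combined with range minima of $\LCP$, and then to realise that computation in place. Fix a text position $j$ and let $r=\ISA[j]$ be its rank. Let $p$ be the largest rank $<r$ with $\SA[p]<j$ and $q$ the smallest rank $>r$ with $\SA[q]<j$ (the previous and next smaller values of $\SA$ at $r$). I claim
\[
  \LPF[j] \;=\; \max\Bigl(\min_{p < k \le r} \LCP[k],\; \min_{r < k \le q} \LCP[k]\Bigr),
\]
where a missing side contributes $0$. This is the whole conceptual content; the rest is turning it into a linear-time, constant-space procedure.

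Correctness follows from the monotonicity of that range minimum. For any earlier position $i<j$ at rank $s=\ISA[i]$, the excerpt's LCE identity gives the length of the longest common prefix of $T[i..n]$ and $T[j..n]$ as the minimum of $\LCP$ over $[\min(s,r)+1..\max(s,r)]$, which can only shrink as $s$ is taken farther from $r$. On the side $s<r$ the nearest candidate with $\SA[s]<j$ is exactly $p$, since every rank in $(p,r)$ has $\SA$-value larger than $j$ and hence starts after $j$; thus $p$ dominates every earlier-starting suffix above $r$, and symmetrically $q$ dominates those below. As $\SA[p]<j$ and $\SA[q]<j$ hold by construction, the two range minima are attained by genuine previous occurrences, which yields equality.

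Algorithmically, computing $(p,q)$ for every rank is the classical previous-/next-smaller-value problem on the permutation $\SA$, solved by a single left-to-right scan with one stack of ranks whose $\SA$-values increase from bottom to top: rank $i$ is pushed after popping every stacked rank whose $\SA$-value exceeds $\SA[i]$, and a rank is popped precisely when the current rank becomes its next smaller value. Each rank is pushed and popped once, so there are $\Oh{n}$ stack operations, each touching a constant number of $\SA$-cells at cost $\timeSA$, for $\Oh{n\timeSA}$ in total. The two range minima are maintained incrementally — a running minimum of $\LCP$ is carried while the scan advances and while ranks are popped — so that at the moment a rank $i$ is popped, with $p$ the rank remaining below it and $q$ the incoming rank, both $\min_{p<k\le i}\LCP[k]$ and $\min_{i<k\le q}\LCP[k]$ are known and $\LPF[\SA[i]]$ is written immediately; no separate range-minimum structure over $\LCP$ is built.

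The delicate point, and the one I expect to cost the most effort, is the constant working-space bound, since a naive stack already needs $\Oh{n}$ words. The way out is that the stack can be folded into the arrays already present. A rank is read for the last time exactly when it is popped, at which instant its output cell $\LPF[\SA[i]]$ is finalised; conversely, while a rank sits on the stack its output cell is still unused, and each entry of $\LCP$ is consulted only once, as the scan passes it, and is dead thereafter. Encoding the stack — the per-entry running minimum together with the link to the entry below — inside these freed cells of $\LPF$ and $\LCP$, and proving that this reuse never clobbers a value still required, is the bookkeeping that the cited theorem carries out; what then remains as genuine working space is only a constant number of registers (the current rank, the top-of-stack rank, and the running minima).
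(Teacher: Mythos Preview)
The paper does not prove this lemma at all; it is stated as a citation of \cite[Theorem~1]{Crochemore2009LCR} and used as a black box. There is therefore no ``paper's own proof'' to compare against. What can be said is that your reconstruction is faithful to the cited source: the characterisation $\LPF[\SA[r]] = \max\bigl(\min_{p<k\le r}\LCP[k],\,\min_{r<k\le q}\LCP[k]\bigr)$ with $p,q$ the previous/next smaller values of $\SA$ at rank~$r$ is exactly the identity underlying the Crochemore et~al.\ algorithm, and the single left-to-right stack scan with incrementally maintained range minima is their computation. The in-place trick you sketch --- threading the stack link through the not-yet-written cell $\LPF[\SA[i]]$ and parking the per-entry running minimum in the already-consumed cell $\LCP[i]$ --- is likewise the space argument of the cited theorem, so your proposal is correct and coincides with the intended proof.
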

Apart from this algorithm, we are only aware of some practical improvements~\cite{OhlebuschG11,KarkkainenKP13}.

Let us consider the size of \LCP{} needed in \Cref{lemLPFCrochemore}.
\citet{cstsada} showed a $2n + \oh{n}$-bits representation of \LCP{}.
Thereto he stores the array $\PLCP$ defined as $\PLCP[\SA[i]] = \LCP[i]$ in a bit vector in the following way (also described in~\cite{fischer10wee}):
Since
$\PLCP[1]+1, \PLCP[2]+2,\ldots,\PLCP[n]+n$ is a non-decreasing sequence with $1 \le \PLCP[1]+1 \le \PLCP[n]+n = n$
($\PLCP[i] \le n-i$ since the terminal {\tt \$} is a unique character in~$T$)
the values $I[1] := \PLCP[1]$ and $I[i] := \PLCP[i] - \PLCP[i-1]+1$ ($2 \le i \le n$) are non-negative.
By writing $I[i]$ in the unary code
${\tt 0}^{I[i]} {\tt 1}$ to a bit vector $S$ subsequently for each $2 \le i \le n$,
we can compute
$\PLCP[i] = \select[1](S, i) - 2 i$
and
$\LCP[i] = \select[1](S, \SA[i]) - 2 \SA[i]$.
Moreover, $\sum_{i=1}^n I[i] \le n$ and therefore $S$ is of length at most $2n$.

By using \citeauthor{cstsada}'s \LCP{}-representation,
we get \LPF{} with the algorithm of \citet{Crochemore2009LCR} in the following time and space bounds:
\begin{Corollary}\label{corLPFsmall}
	Having \SA{} and \LCP{} stored in $n \lg n$ bits (this allows \timeSA = \Oh{1}) and $2n + \oh{n}$ bits, respectively, 
we can compute \LPF{} with $\Oh{\lg n}$ additional bits of working space (not counting the space for \LPF{}) in \Oh{n} time.
\end{Corollary}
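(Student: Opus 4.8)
The plan is to run \citeauthor{Crochemore2009LCR}'s algorithm from \Cref{lemLPFCrochemore} verbatim on top of the given compact structures, and to argue that substituting the plain \LCP{} array by \citeauthor{cstsada}'s bit-vector representation changes neither the \Oh{n} running time nor the working space beyond the allotted \Oh{\lg n} bits. The statement is thus essentially a composition result, and the work lies in checking that each ingredient plugs in at the claimed cost.

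First I would pin down the per-access costs of the two inputs. As \SA{} resides in $n\lg n$ bits, reading an entry $\SA[i]$ costs \Oh{1} time, i.e.\ $\timeSA = \Oh{1}$. For \LCP{}, the $2n + \oh{n}$-bit representation stores the bit vector $S$ together with its constant-time select structure, the latter accounting for the \oh{n} term; crucially this structure is part of the \emph{given} input and is never (re)built during the computation, so it consumes none of our working space. The identity $\LCP[i] = \select[1](S, \SA[i]) - 2\SA[i]$ then answers an arbitrary query $\LCP[i]$ with one \SA{} access plus one select query, both in \Oh{1} time. Hence the compact encoding offers exactly the \Oh{1}-time indexed random access that a plain \LCP{} array would.

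It remains to invoke \Cref{lemLPFCrochemore}: given \SA{} and \LCP{}, it outputs \LPF{} in $\Oh{n\timeSA}$ time using only constant working space on top of the output. Since the algorithm manipulates \SA{} and \LCP{} solely by indexed reads, replacing them with the compact structures only alters the cost charged to each read, which we have just bounded by \Oh{1}; the number of reads is unchanged, so the total time stays $\Oh{n\timeSA} = \Oh{n}$. Finally, the ``constant working space'' of \Cref{lemLPFCrochemore} denotes a constant number of machine words, which under our word-RAM model of word size $\Om{\lg n}$ is precisely \Oh{\lg n} bits, matching the claim.

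The step that warrants the most care is verifying that the \LCP{} substitution is genuinely transparent to \citeauthor{Crochemore2009LCR}'s algorithm: one must confirm that it performs only $\Oh{n\timeSA}$ indexed accesses to \LCP{} in total, each originally costing \Oh{1}, and that it never exploits the contiguous layout of a plain array (for instance through sequential block scans or pointer arithmetic) in a way the select-based access could not emulate at the same asymptotic cost. Once this transparency is established, the stated time and space bounds follow immediately from the composition of \Cref{lemLPFCrochemore} with \citeauthor{cstsada}'s representation.
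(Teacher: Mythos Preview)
Your proposal is correct and follows exactly the route the paper intends: the corollary is stated as an immediate consequence of plugging \citeauthor{cstsada}'s $2n+\oh{n}$-bit \LCP{} representation (with its constant-time select-based access $\LCP[i]=\select[1](S,\SA[i])-2\SA[i]$) into the algorithm of \Cref{lemLPFCrochemore}, whose constant working space amounts to $\Oh{\lg n}$ bits in the word-RAM model. The paper offers no separate proof beyond the preceding sentence, so your added verification that each \LCP{} access stays \Oh{1} and that the algorithm touches \LCP{} only via indexed reads is simply a careful unpacking of what the corollary presupposes.
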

By plugging in a suffix array construction algorithm like the in-place construction algorithm by \citet{LiLH16a}, we get 
the bounds shown in \Cref{tableLPF} (since we can build \LCP{} in-place after having \SA{}~\cite{hon02space}).

Although this result seems compelling, this approach stores \SA{} and \LPF{} in plain arrays (the former for getting constant time access).
In the following, we will show that the \LPF{} array can be stored more compactly.
We start with a new representation of \LPF{}, for which we use the same trick as for \PLCP{} due to the following property
(which is crucial for squeezing \PLCP{} into $2n + \oh{n}$ bits).

\begin{Lemma}\label{lemLPFprop}
\(
n-j \ge
\LPF[j] \ge \LPF[j-1]-1
\)
for $2 \le j \le n$.
\end{Lemma}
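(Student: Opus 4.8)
The plan is to establish the two inequalities separately, both resting on reading $\LPF[j]$ as the length of the longest prefix of the suffix $T[j..n]$ that also occurs starting at some earlier position $i \in [1..j-1]$.

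First I would handle the upper bound $\LPF[j] \le n - j$. The largest conceivable value of the matching length is $\ell = n - j + 1$, for which the matched prefix $T[j..j+\ell-1]$ equals the whole suffix $T[j..n]$ and hence contains the terminal symbol $\$$ at position $n$. Since $\$$ occurs nowhere else in $T$, any earlier occurrence $T[i..i+\ell-1]$ with $i < j$ would force $\$$ to appear at position $i + \ell - 1 < n$, a contradiction. Thus the matched prefix cannot reach position $n$, giving $\LPF[j] \le n - j$; the boundary case $j = n$ is consistent since then $T[j..n] = \$$ is unique, so $\LPF[n] = 0 = n - j$.

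For the lower bound $\LPF[j] \ge \LPF[j-1] - 1$, I would use a shift-by-one argument. Set $\ell := \LPF[j-1]$; if $\ell = 0$ the inequality is immediate because $\LPF[j] \ge 0 > -1$, so assume $\ell \ge 1$. By definition there is an $i \in [1..j-2]$ with $T[i..i+\ell-1] = T[j-1..j-1+\ell-1]$. Dropping the first character on each side yields $T[i+1..i+\ell-1] = T[j..j+\ell-2]$, a common substring of length $\ell - 1$ that is precisely the length-$(\ell-1)$ prefix of $T[j..n]$. Since $i \le j - 2$, the starting position $i+1$ satisfies $1 \le i+1 \le j-1$, so it is a legitimate earlier occurrence, whence $\LPF[j] \ge \ell - 1 = \LPF[j-1] - 1$.

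The only subtlety is index bookkeeping: verifying that $i+1$ remains a valid earlier position (which follows from $i \le j-2$) and correctly invoking the uniqueness of $\$$ for the upper bound. Both are routine once the shift is set up, so I expect no real obstacle beyond care with the boundary cases $\ell = 0$ and $j = n$.
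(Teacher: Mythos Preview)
Your argument is correct and follows the same shift-by-one idea as the paper's proof for the lower bound; the paper's proof is terser and omits the $\ell=0$ case and the upper bound, which you handle explicitly via the uniqueness of \texttt{\$}. Nothing further to add.
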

\begin{proof}
There is an $i$ with $1 \le i < j-1$ such that $T[i..i+\LPF[j-1]-1] = T[j-1..j-1+\LPF[j-1]-1]$.
Hence $T[i+1..i+\LPF[j-1]-1] = T[j..j-1+\LPF[j-1]-1]$.
\end{proof}
We conclude that the sequence $\LPF[1]+1, \LPF[2]+2,\ldots,\LPF[n]+n$ is non-decreasing with $1 \le \LPF[1]+1 \le \LPF[n]+n \le n$.
We immediately get:

\begin{Corollary}\label{corLPFrepresentation}
	\LPF{} can be represented by a bit vector with a select data structure such that accessing an \LPF{} value can be performed in constant time.
	The data structures use $2n + \oh{n}$ bits.
\end{Corollary}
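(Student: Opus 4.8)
The plan is to reuse verbatim the gap-encoding trick that was just applied to \PLCP{}, since \Cref{lemLPFprop} endows \LPF{} with exactly the structural property that made that trick work. Indeed, the sentence following \Cref{lemLPFprop} already records the needed consequence: the sequence $\LPF[1]+1, \LPF[2]+2, \ldots, \LPF[n]+n$ is non-decreasing and its values lie in $[1..n]$. So I would first set $I[1] := \LPF[1]$ and $I[j] := \LPF[j] - \LPF[j-1] + 1$ for $2 \le j \le n$; by the lower bound $\LPF[j] \ge \LPF[j-1]-1$ of \Cref{lemLPFprop} each $I[j]$ is a non-negative integer, so it admits the unary code $\mathtt{0}^{I[j]}\mathtt{1}$.

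Next I would concatenate these unary codes, in order of $j$, into a single bit vector $S$. To bound its length I would observe that the codes contribute exactly $n$ one-bits and $\sum_{j=1}^{n} I[j]$ zero-bits. The gap sum telescopes to $\sum_{j=1}^n I[j] = \LPF[n] + n - 1$, and since $T[n]=\$$ is unique we have $\LPF[n]=0$ (equivalently, the upper bound $\LPF[n]+n \le n$ of \Cref{lemLPFprop} forces it), so $\sum_j I[j] = n-1$ and hence $\abs{S} \le 2n$.

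For the access, the same telescoping identity shows that the position of the $j$-th one-bit in $S$ differs from $\LPF[j] + 2j$ by a fixed constant, so an \LPF{} value is recovered by a formula of the identical shape $\LPF[j] = \select[1](S,j) - 2j$ used for \PLCP{}; this is a single constant-time select query once we place a select data structure on top of $S$. By the structure of \citet{clark96compact} cited in \Cref{secDefinitions}, that data structure occupies only \oh{n} bits above $S$ and answers select in \Oh{1} time, yielding the claimed $2n + \oh{n}$ bits overall.

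Since \Cref{lemLPFprop} has already done the combinatorial work, I do not expect a genuine obstacle here; the statement is essentially a transcription of the \PLCP{} argument with \LPF{} in place of \PLCP{}. The only point requiring care is the length bound $\abs{S} \le 2n$, which hinges on $\LPF[n]=0$ and therefore on the sentinel assumption $T[n]=\$$. The access formula is a routine telescoping computation whose exact additive constant I would pin down only to confirm it matches the \PLCP{} derivation.
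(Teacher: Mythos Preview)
Your proposal is correct and is precisely the intended argument: the paper does not spell out a proof at all but simply writes ``We immediately get'' after observing (via \Cref{lemLPFprop}) that $\LPF[j]+j$ is non-decreasing and bounded by~$n$, i.e., it tacitly reuses the \PLCP{} gap-encoding you have written out in detail. The only cosmetic point is the additive constant in the select formula, which you already flagged as something to pin down.
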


To get a better working space bound, 
we have to come up with a new algorithm since the algorithm of \Cref{lemLPFCrochemore} creates a plain array to get constant time random write-access for computing the entries of \LPF{}.
To this end, we present two algorithms that compute \LPF{} in this representation with the aid of the suffix tree.
The two algorithms are derivatives of the algorithms~\cite{lzcics,lzciss} that compute the Lempel-Ziv factorization, either 
in \Oh{n \lg \lg \sigma} time using \Oh{n \lg \sigma} bits, or in \Oh{n/\epsilon^2} time using $(1+\epsilon)n \lg n + \Oh{n}$ bits, for a constant $0 < \epsilon \le 1$.
The current bottleneck of both algorithms is the suffix tree implementation with respect to space and time.
Due to current achievements~\cite{mnk17,LiLH16a}, the algorithms now run in \Oh{n} time using \Oh{n \lg \sigma} bits, or in \Oh{n/\epsilon} time using $(1+\epsilon)n \lg n + \Oh{n}$ bits, respectively.

\begin{table}[t]
	\floatbox[{\capbeside\thisfloatsetup{capbesideposition={right},capbesidewidth=3.5cm}}]{figure}[\FBwidth]
	{%
	\caption{Algorithms computing \LPF{}; space is counted in bits. The output space~$\abs{\LPF}$ is not considered as working space. $0 < \epsilon \le 1$ is a constant.}
	\label{tableLPF}
}{%
\hspace{-1em}
		\begin{tabular}{llll}
			algorithm & time & working space & \abs{\LPF}
			\\\toprule
			\Cref{lemLPFCrochemore},\cite{Crochemore2009LCR} & ${\Oh{n \timeSA}}$ & ${\abs{\SA} + \abs{\LCP} + \Oh{\lg n}}$ & $n \lg n$ 
		\\\midrule
		\Cref{corLPFsmall},\cite{LiLH16a,hon02space} & ${\Oh{n}}$ & ${n \lg n + 2n + \Oh{\lg n}}$ & $n \lg n$ 
			 \\\midrule
				\Cref{lemLPFmy},\cite{lzcics} & ${\Oh{n/\epsilon}}$ & ${(1+\epsilon)n \lg n + \Oh{n}}$ & ${2n + \oh{n}}$ 
			\\\midrule
			\Cref{lemLPFmy},\cite{lzciss} & ${\Oh{n \timeSA}}$ & ${\Oh{n \lg \sigma}}$ & ${2n + \oh{n}}$ 
			 \\\bottomrule
		\end{tabular}
}%
\end{table}

We aim at building the \LPF{}-representation of \Cref{corLPFrepresentation} directly such that we do not need to allocate 
the plain LPF array using $n \lg n$ bits in the first place.
To this end we create a bit vector of length $2n$ and store the \LPF{} values in it successively.
In more detail, we follow the description of the Lempel-Ziv factorization algorithms presented in~\cite{lzcics,lzciss}.
There, the algorithms are divided into several passes.
In each pass we successively visit leaves in text order (determined by the labels of the leaves).
To compute \LPF{}, we only have to do a single pass.
Similarly to the first passes of the two Lempel-Ziv algorithms, 
we use a bit vector~\bv{V} to mark already visited internal nodes.
On visiting a leaf we climb up the tree until reaching the root or an already marked node.
In the former case (we climbed up to the root) we output zero.
In the latter case, we output the string depth of the marked node.
By doing so, we have computed $\LPF[1..j]$ after having processed the leaf with label~$j$.

\begin{Lemma}\label{lemLPFmy}
	We can compute \LPF{} in \Oh{n \timeSA} time with \Oh{n \lg \sigma} bits of
working space, or in \Oh{n/\epsilon} time using
$(1+\epsilon)n \lg n + \Oh{n}$ bits of working space, for a constant $0 < \epsilon \le 1$.
Both variants include the space of the output in their working spaces.
\end{Lemma}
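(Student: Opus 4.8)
The plan is to verify that the single-pass climbing procedure sketched just above indeed emits the values $\LPF[1], \LPF[2], \ldots, \LPF[n]$ in exactly this order, and then to read off the two pairs of bounds from the two underlying suffix-tree representations.

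\textbf{Correctness.} First I would fix the invariant that, after the leaves labeled $1, \ldots, j-1$ have been processed, an internal node is marked in $\bv{V}$ exactly when its subtree contains at least one of these leaves. Since the ancestor set of a leaf is upward closed, the set of marked nodes is upward closed as well, which is precisely why a climb may stop at the first marked node it meets and newly mark only the nodes strictly below it. Recall that leaf $j$ corresponds to the suffix $T[j..n]$ and that the string depth of the lowest common ancestor of leaves $i$ and $j$ equals the length of the longest common prefix of $T[i..n]$ and $T[j..n]$. When leaf $j$ is processed, its deepest marked ancestor $w$ is the deepest node on the root-to-$j$ path whose subtree also contains some earlier leaf $i < j$; hence the string depth of $w$ is the largest length of a common prefix of $T[j..n]$ with any earlier suffix $T[i..n]$, $i<j$, which is exactly $\LPF[j]$ by definition. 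If the climb reaches the root (string depth $0$), no earlier suffix shares a prefix with $T[j..n]$ and $\LPF[j]=0$, matching the emitted zero. I expect this step to be routine once the invariant is stated.

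\textbf{Output in the compact representation.} The crucial observation is that the leaves are visited in text order, so the procedure produces $\LPF[1], \ldots, \LPF[n]$ strictly in index order. By \Cref{lemLPFprop} the sequence $\LPF[1]+1, \ldots, \LPF[n]+n$ is non-decreasing, so the differences driving the unary encoding of \Cref{corLPFrepresentation} are non-negative and can be appended to the length-$2n$ bit vector one block at a time as each value becomes available. Thus no plain $n\lg n$-bit array and no random write access are required; the output occupies $2n + \oh{n}$ bits and is built on the fly, which is exactly what lets the output space be subsumed into the stated working space.

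\textbf{Complexity and the two variants.} For the running time I would use an amortization argument: each internal node is newly marked at most once over the whole pass, so the total number of upward steps is bounded by the number of internal nodes plus one stopping step per leaf, i.e.\ $\Oh{n}$ node operations in total. Each such operation, namely navigating to the parent, testing and setting a bit in $\bv{V}$, and reading the string depth of the stopping node, is charged to the cost of the chosen suffix-tree implementation. Plugging in the compressed suffix tree underlying the first referenced Lempel--Ziv algorithm yields $\Oh{n\lg\sigma}$ bits with each operation costing $\Oh{\timeSA}$, hence $\Oh{n\timeSA}$ time; plugging in the $(1+\epsilon)n\lg n + \Oh{n}$-bit structure of the second algorithm gives $\Oh{n/\epsilon}$ time. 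Adding the $\Oh{n}$ bits for $\bv{V}$ and the $2n+\oh{n}$ output bits keeps both configurations within their stated space bounds.

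The main obstacle I anticipate is not the correctness but the interface with the compressed suffix tree: I must ensure that visiting leaves in text order, the parent navigation, the marking vector $\bv{V}$ indexed by internal-node identifiers, and the string-depth query at the stopping node are all supported within the claimed space, and that their per-operation costs multiply against the amortized $\Oh{n}$ operations to give exactly $\Oh{n\timeSA}$ respectively $\Oh{n/\epsilon}$ rather than a larger product. Verifying that the string depth can be obtained, or accumulated along the climb, without extra space is the delicate point.
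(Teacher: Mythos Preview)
Your approach matches the paper's: the single-pass climb with marking, the text-order emission into the $2n$-bit unary encoding, and the amortization are exactly what the surrounding text sets up, and your correctness invariant is the right one. The only place you stop short is the point you yourself flag as delicate; the paper's one-line proof addresses precisely this by noting that the string depth of the stopping node is obtained via an RMQ on \LCP{} together with an access to \SA{}, both of which the compressed suffix array supports in $\timeSA$ time once \PLCP{} is available in its $2n$-bit form, so the per-operation cost really is $\Oh{\timeSA}$ (respectively $\Oh{1/\epsilon}$) and the product with the amortized $\Oh{n}$ operations gives the stated bounds.
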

\begin{proof}
Computing the string depth of a node needs access to an RMQ data structure of \LCP{}, and an access to \SA{}.
Both accesses can be emulated by the compressed suffix array in $\timeSA$ time, given that we have computed \PLCP{} in the 
above representation.
\end{proof}

\section{Computing the Set of All Distinct Squares}\label{secDistinctSquares}
Given a string~$T$, our goal is to compute all distinct squares of~$T$.
Thereto we return a set of pairs, where each pair~$(s,\ell)$ consists of a starting position~$s$ and a length~$\ell$ such that 
$T[s..s+\ell-1]$ is the leftmost occurrence of a square.
The size of this set is linear due to
\begin{Lemma}[\citet{FraenkelS98}]\label{lemNumberSquaresLinear}
	A string of length~$n$ can contain at most $2n$ distinct squares.
\end{Lemma}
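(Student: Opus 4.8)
The plan is to use a charging argument in the spirit of the original proof. By reversing $T$ if necessary, it suffices to bound the number of distinct squares when each is identified with its \emph{rightmost} occurrence; concretely, I would assign to each distinct square the starting position of its rightmost occurrence in $T$. Since there are exactly $n$ positions, the bound $2n$ follows immediately once I establish the key combinatorial claim: for every position $i$, at most two distinct squares have their rightmost occurrence starting at $i$. Summing this bound over all $n$ positions is the only arithmetic involved.

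For the key claim I would argue by contradiction. Suppose three distinct squares $UU$, $VV$, $WW$ with $|U| < |V| < |W|$ all have their rightmost occurrence starting at $i$; then all three are prefixes of the suffix $x := T[i..n]$. The main tool is the \emph{three-squares lemma}: if three squares with strictly increasing periods are prefixes of a common word and the shortest root is primitive, then $|U| + |V| \le |W|$, a Fibonacci-type growth bound on the periods. Granting that $U$ is primitive together with this inequality, I would derive the contradiction directly. Since $2|U| \le |U| + |V| \le |W|$, the square $UU = x[1..2|U|]$ is a prefix of $W = x[1..|W|]$; but $WW$ is a prefix of $x$, so $W$ also equals $x[|W|+1..2|W|]$, whence $UU$ occurs at position $|W|+1$ of $x$ as well. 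Translated back to the text this is an occurrence of $UU$ starting at position $i + |W| > i$, contradicting that $i$ is the rightmost occurrence of $UU$.

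The hard part is precisely the combinatorial lemma underpinning this step. First, the three-squares inequality itself must be proved, which I would do with a Fine--Wilf periodicity argument on the overlapping periods $|U|$, $|V|$, $|W|$ of the common prefix $x$. Second, the primitivity hypothesis on $U$ must be discharged: if the shortest root among the three squares charged to $i$ is non-primitive, say $U = Z^e$ with $Z$ primitive and $e \ge 2$, I would reduce to the primitive case by replacing $UU$ with the primitively rooted square $ZZ$ and tracking how the occurrences of a non-primitively rooted square relate to those of its primitive-root square, again via synchronization of periodic factors. Once these two periodicity facts are in place, the charging argument closes and yields at most $2n$ distinct squares.
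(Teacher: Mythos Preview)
The paper does not prove this lemma at all; it simply cites it as a known result of Fraenkel and Simpson and uses it as a black box. Your proposal is essentially a sketch of the original Fraenkel--Simpson argument (charge each distinct square to the start of its last occurrence, then use the three-squares inequality to rule out three charges at one position), so there is nothing to compare against in the paper itself.

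One caveat on the sketch: the reduction you outline for the non-primitive case, ``replace $UU$ by $ZZ$ with $Z$ the primitive root of $U$'', does not work as stated, because $ZZ$ need not have its rightmost occurrence at $i$ (indeed it typically occurs further right inside the run of $Z$'s). The actual Fraenkel--Simpson argument, and the cleaner later proofs (e.g.\ Ilie's), handle this differently: one shows directly that if $UU$, $VV$, $WW$ are prefixes at $i$ with $|U|<|V|<|W|$ then some square among them reoccurs strictly to the right, without first forcing $U$ to be primitive via a substitution. The Fine--Wilf step you mention is the right tool, but it is applied to the pair $U,V$ (or $V,W$) inside the longer square to produce the later occurrence, rather than to manufacture a primitive replacement. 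If you want a fully rigorous write-up, that is the place to tighten.
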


We follow the approach of~\citet{GusfieldS04}.
Their idea is to compute a set of squares (the set stores pairs of position and length like described in \Cref{secDefinitions})\footnote{It differs to the set we want to compute by the fact that they allow, among others, occurrences of the same square in their set.} with which they can generate all distinct squares.
They call this set of squares a \intWort{leftmost covering set}.
A leftmost covering set obeys the property that every square of the text can be constructed by right-rotating a square of this set.
A square~$(k,\ell)$ is constructed by \intWort{right-rotating} a square~$(i,\ell)$ with $i \le k$ iff
each tuple~$(i+j,\ell)$ with $1 \le j \le k-i$ represents a square~$T[i+j..i+\ell+j-1] = T[i+j..i+\ell-1]T[i..i+j-1]$.

The set of the leftmost occurrences of all squares is a set of all distinct squares.
Unfortunately, the leftmost covering set computed in~\cite{GusfieldS04} is not necessarily a set of all distinct squares since 
(a) it does not have to be distinct, and (b) a square might be missing that can be constructed by right-rotating a square of the computed leftmost covering set.

For illustration, the squares of our running example $T = $
\begin{adjustbox}{trim=0 1em 0 0}
\begin{tikzpicture}
\tikzstyle{array} = [matrix of nodes,
inner sep=0.2mm,
font=\normalsize\ttfamily,nodes={draw, minimum size=0mm},column sep=-\pgflinewidth, row sep=0mm, nodes in empty cells,
	row 1/.style={nodes={draw=none, fill=none,text height=0.5em, minimum size=0mm}},
]
\matrix[array] (warray) {%
a&b&a&b&a&a&a&b&a&b&a&\$
\\};

\tikzstyle{fsquare} = [line width=0.1mm]
\tikzstyle{nsquare} = [densely dotted]

\newcommand{\braU}[4][]{%
 \draw [#1] ([yshift=#4] warray-1-#2.north west) -- node[above] {} ([yshift=#4] warray-1-#3.north east);
 }
\newcommand{\braD}[4][]{%
 \draw [#1] ([yshift=-#4] warray-1-#2.south west) -- node[below] {} ([yshift=-#4] warray-1-#3.south east);
 }

\braU[fsquare]{1}{4}{0.5mm}
\braU[fsquare]{2}{5}{1mm}
\braD[fsquare]{5}{6}{0.5mm}
\braD[nsquare]{6}{7}{1mm}
\braU[nsquare]{7}{10}{0.5mm}
\braU[nsquare]{8}{11}{1mm}
\end{tikzpicture}
\end{adjustbox}
are highlighted with bars.
The set of all squares is
$\menge{(1,4), (2,4), (5,2), (6,2), (7,4), (8,4)}$.
If we take the leftmost occurrences of all squares, we get
\menge{(1,4), (2,4), (5,2)}; this set comprises all squares marked by the solid bars, i.e., the dotted bars correspond to occurrences of squares that are not leftmost. 
In this example, the dotted bars form the set \menge{(6,2), (7,4), (8,4)}, which is a set of all distinct squares.
A leftmost covering set is \menge{(1,4),(5,2)}.

Our goal is to compute the set of all leftmost occurrences directly by modifying the algorithm of~\cite{GusfieldS04}.
To this end, we briefly review how their approach works:
They compute their leftmost covering set by examining the borders between all Lempel-Ziv factors $f_1 \cdots f_z = T$.
That is because of
\begin{Lemma}[{\cite[Theorem 5]{GusfieldS04}}]\label{lemLeftmostOcc}
	The leftmost occurrence of a square~$T[i..i+2p-1]$ touches at least two Lempel-Ziv factors.
	Let $f_x$ be the factor that contains the center of the square~$i+p-1$. Then either
	\begin{enumerate}[(a)]
		\item the square has its left end (position~$i$) inside~$f_x$ and its right end (position $i+2p-1$) inside $f_{x+1}$, or \label{itLeftmostOcc1}
		\item the left end of the square extends into $f_{x-1}$ (or even further left). \label{itLeftmostOcc2}
			The right end can be contained inside~$f_x$ or~$f_{x+1}$.
	\end{enumerate}
\end{Lemma}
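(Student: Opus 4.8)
The plan is to derive both cases from the single defining property of the Lempel-Ziv factorization: a factor $f_x$ of length at least two copies a substring lying strictly to its left, and it is the \emph{longest} prefix of the remaining text with this property (so its length equals the corresponding \LPF{} value, up to the $\max(1,\cdot)$ for new characters). Write $\B{x}$ and $\E{x}$ for the first and last text position of $f_x$, so that $\B{x+1}=\E{x}+1$, and abbreviate the two halves of the square by $S := T[i..i+p-1] = T[i+p..i+2p-1]$. Since the center lies in $f_x$ we have $\B{x} \le i+p-1 \le \E{x}$, and hence $i+p \le \B{x+1}$, an inequality used repeatedly below.

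First I would dispose of the ``touches at least two factors'' claim together with the impossibility of the square lying inside a single factor. If $T[i..i+2p-1]$ were contained in one factor, that factor would have length at least $2p \ge 2$ and would therefore copy an earlier substring; the square would then reappear at the matching position of that earlier copy, which starts strictly before~$i$---contradicting that $T[i..i+2p-1]$ is the \emph{leftmost} occurrence. This is the only step that uses leftmost-ness.

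The main obstacle is the right-end bound $i+2p-1 \le \E{x+1}$, where the period of the square does the work. Consider the part of the second half that reaches position $\B{x+1}$ or beyond, namely $T[\B{x+1}..i+2p-1]$. By the period-$p$ equality $T[j]=T[j-p]$ (valid here because $\B{x+1}\ge i+p$) it equals $T[\B{x+1}-p..i+p-1]$, an occurrence starting at $\B{x+1}-p < \B{x+1}$. Consequently $\LPF[\B{x+1}] \ge (i+2p-1)-\B{x+1}+1$, so $f_{x+1}$ must extend at least to position $i+2p-1$, i.e.\ $\E{x+1} \ge i+2p-1$; the bound is trivially true when the square already ends inside $f_x$, so that no such tail exists. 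Because the right end also satisfies $i+2p-1 \ge i+p-1 \ge \B{x}$, it lies in $f_x$ or $f_{x+1}$, never earlier and never further right.

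Finally I would split on the left end. If $i \ge \B{x}$, then the left end is in $f_x$; the right end cannot also lie in $f_x$, as the whole square would then sit in one factor, already excluded, so $i+2p-1 \ge \B{x+1}$ and with the bound above the right end lies in $f_{x+1}$---this is case~(a). If instead $i < \B{x}$, the left end extends into $f_{x-1}$ or further left, while the right end lies in $f_x$ or $f_{x+1}$ by the previous paragraph---this is case~(b). The boundary situations are consistent and quick to check: when $x=1$ the single-character factor $f_1$ forces $i=p=1$, which falls into case~(a) so that $f_{x-1}$ is never needed; and since $T[n]=\$$ is unique, the last factor is the single symbol $\$$, which no square can cover, so the center always lies in some $f_x$ with $x<z$ and $f_{x+1}$ indeed exists.
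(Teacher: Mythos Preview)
Your proof is correct. Note, however, that the paper does not give its own proof of this lemma: it is quoted verbatim from Gusfield and Stoye and used as a black box. So there is nothing to compare against in this paper; your argument simply supplies what the authors chose to import by citation.

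For what it is worth, your route is the standard one and matches the original proof in spirit: the single-factor case is excluded by the copy property of Lempel--Ziv (any factor of length $\ge 2$ reoccurs earlier, contradicting leftmost-ness), and the right-end bound $i+2p-1 \le \E{x+1}$ follows because the tail $T[\B{x+1}..i+2p-1]$ has a period-$p$ copy starting at $\B{x+1}-p < \B{x+1}$, forcing $\LPF[\B{x+1}]$ and hence $|f_{x+1}|$ to be large enough. Your handling of the degenerate cases ($x=1$, $x=z$) is clean and correct.
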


\begin{figure}[t]
	\centering{%
		\begin{tikzpicture}[xscale=0.9,yscale=0.5]
\newcount\fX%
\pgfmathsetcount{\fX}{3}
\newcount\fnX%
\pgfmathsetcount{\fnX}{5}
\newcount\qX%
\pgfmathsetcount{\qX}{2}
\newcommand{\Height}{1.7}
\newcommand{\ellOne}{0.5}
\newcommand{\ellTwo}{0.7}
\draw (0,\Height) rectangle node {} (\fX,0);
\draw [<->] (\qX,-0.5) -- node [midway,below] {$p$} (\fX,-0.5);
\draw[decorate,decoration={brace,amplitude=10pt}](\fX,0) -- node[below,yshift=-3mm] {$f_x$} (0,0);
\draw (\fX,\Height) rectangle node {} (\fnX,0);
\draw[decorate,decoration={brace,amplitude=10pt}](\fnX,0) -- node[below,yshift=-3mm] {$f_{x+1}$} (\fX,0);
\draw (-1,0) -- (0,0);
\draw (-1,\Height) -- (0,\Height);
\draw (\fnX,0) -- ($(\fnX,0)+(1,0)$);
\draw (\fnX,\Height) -- ($(\fnX,\Height)+(1,0)$);

\draw [dashed] (\qX,0) -- (\qX,\Height) node [above] {$q$};
\draw [->] (\qX,0.75) -- node [midway,above] {$\ell_{\textup{R}}$} ($(\qX,0.75)+(\ellOne,0)$);
\draw [->] (\fX,0.5) -- node [midway,above] {$\ell_{\textup{R}}$} ($(\fX,0.5)+(\ellOne,0)$);

\draw [<-] ($(\qX,0.5)-(0.0+\ellTwo,0)$) -- node [midway,above] {$\ell_{\textup{L}}$} ($(\qX,0.5)$);
\draw [<-] ($(\fX,0.25)-(\ellTwo,0)$) -- node [midway,above] {$\ell_{\textup{L}}$} ($(\fX,0.25)$);
\end{tikzpicture}
\hspace{1em}
\begin{tikzpicture}[xscale=0.9,yscale=0.5]
\newcount\fX%
\pgfmathsetcount{\fX}{3}
\newcount\fnX%
\pgfmathsetcount{\fnX}{-2}
\newcount\qX%
\pgfmathsetcount{\qX}{1}
\newcommand{\Height}{1.7}
\newcommand{\ellOne}{0.5}
\newcommand{\ellTwo}{0.7}
\draw (0,\Height) rectangle node {} (\fX,0);
\draw [<->] (0,-0.5) -- node [midway,below] {$p$} (\qX,-0.5);
\draw[decorate,decoration={brace,amplitude=10pt}](\fX,0) -- node[below,yshift=-3mm] {$f_x$} (0,0);
\draw (\fX,\Height) rectangle node {} (\fnX,0);
\draw[decorate,decoration={brace,amplitude=10pt}](0,0) -- node[below,yshift=-3mm] {$f_{x-1}$} (\fnX,0);
\draw (\fX,0) -- ($(\fX,0)+(1,0)$);
\draw (\fX,\Height) -- ($(\fX,\Height)+(1,0)$);
\draw (\fnX,0) -- ($(\fnX,0)-(1,0)$);
\draw (\fnX,\Height) -- ($(\fnX,\Height)-(1,0)$);

\draw [dashed] (\qX,0) -- (\qX,\Height) node [above] {$q$};
\draw [->] (\qX,0.25) -- node [midway,above] {$\ell_{\textup{R}}$} ($(\qX,0.25)+(\ellOne,0)$);
\draw [->] (0,0.75) -- node [midway,above] {$\ell_{\textup{R}}$} ($(0,0.75)+(\ellOne,0)$);

\draw [<-] ($(\qX,0.5)-(0.0+\ellTwo,0)$) -- node [midway,above] {$\ell_{\textup{L}}$} ($(\qX,0.5)$);
\draw [<-] ($(0,0.25)-(\ellTwo,0)$) -- node [midway,above] {$\ell_{\textup{L}}$} ($(0,0.25)$);
\end{tikzpicture}
	}
	\caption{Search for squares on Lempel-Ziv borders. 
	The left image corresponds to squares of type \Cref{lemLeftmostOcc}(\ref{itLeftmostOcc1}), the right image to the type \Cref{lemLeftmostOcc}(\ref{itLeftmostOcc2}). 
	Given two adjacent factors, we determine a position~$q$ that is $p$ positions away from the border (the direction is determined by the type of square we want to search for). 
	By two LCE queries we can determine the lengths $\ell_{\textup{L}}$ and $\ell_{\textup{R}}$ that indicate the presence of a square if $\ell_{\textup{L}} + \ell_{\textup{R}} \ge p$.
	}
	\label{figSquareSearch}
\end{figure}

Having a data structure for computing LCE queries on the text and on its inverse, they can probe at the borders of two consecutive factors whether there is a square.
Roughly speaking, they have to check at most $\abs{f_x}+\abs{f_{x+1}}$ many periods at the borders of every two consecutive factors~$f_x$ and $f_{x+1}$ due to the above lemma.
This gives $\sum_{x=1}^z \timeLCE \tuple{ \abs{f_x}+\abs{f_{x+1}} } = \Oh{n \timeLCE}$ time, 
during which they can compute a leftmost covering set~$L$.
\Cref{figSquareSearch} visualizes how the checks are done.
Applying the algorithm on our running example will yield the set~$L = \menge{(1,4), (5,2), (7,4)}$.
To transform this set into a set of all distinct squares, their algorithm runs the so-called Phase~II that uses the suffix tree.
It begins with computing the locations of the squares belonging to a subset $L' \subseteq L$ in the suffix tree in \Oh{n} time.
This subset $L'$ is still guaranteed to be a leftmost covering set.
Finally, their algorithm computes all distinct squares of the text
by right-rotating the squares in $L'$.
In their algorithm, the right-rotations are done by
\emph{suffix link walks} over the suffix tree.
Their running time analysis is based on the fact that
each node has at most $\sigma_T$ incoming suffix links,
where $\sigma_T$ denotes the number of different characters occurring in
the text $T$.
Given that the number of distinct squares is linear, Phase~II runs in $\Oh{n \sigma_T}$ time.

In the following, we will present our modification of the above sketched algorithm.
To speed up the computation, 
we discard the idea of using the suffix links for right-rotating squares (i.e., we skip Phase~II completely).
Instead, we compute a list of all distinct squares directly.
To this end, we show a modification of the sketched algorithm such that it outputs this list sorted first by the lengths (of the squares), and second by the starting position.

First, we want to show that we can change the original algorithm to output its leftmost covering set in the above described order. 
To this end, we iterate over all possible periods, and search not yet reported squares at all Lempel-Ziv borders, for each period.
To achieve linear running time,
we want to skip a factor~$f_x$ when the period becomes longer than~$\abs{f_x}+\abs{f_{x+1}}$.
We can do this with an array~$Z$ of $z \lg z$ bits that is zero initialized.
When the period~$p$ becomes longer than~$\abs{f_x}+\abs{f_{x+1}}$, we write $Z[x] \gets \min \menge{y > x : \abs{f_y}+\abs{f_{y+1}} \ge p}$ such that $Z[x]$ refers to the next
factor whose length is sufficiently large.
By doing so, if $Z[x] \not= 0$, we can skip all factors~$f_y$ with $y \in [x..Z[x]-1]$ in constant time.
This allows us running the modified algorithm still in linear time.

We have to show that the modified algorithm still computes the same set.
To this end, let us fix the period~$p$ (over which we iterate in the outer loop).
By~\cite[Lemma 7]{GusfieldS04}, processing squares satisfying \Cref{lemLeftmostOcc}(\ref{itLeftmostOcc1}) before processing squares satisfying \Cref{lemLeftmostOcc}(\ref{itLeftmostOcc2})
(all squares have the same period~$p$) produces the desired output for period~$p$.

Finally, we show the modification that computes all distinct squares (instead of the original leftmost covering set).
On a high level, we use an RMQ data structure on \LPF{} to filter already found squares.
The filtered squares are used to determine the leftmost occurrences of all squares by right-rotation.
In more detail, we modify Algorithm~1 of~\cite{GusfieldS04} by filtering the squares in the following way~(see \Cref{algo1}):
For each period~$p$, we use a bit vector~\bv{} marking the beginning positions of all found squares with period~$p$.
On reporting a square, we additionally mark its starting position in \bv{}. 
By doing so, an invariant of the algorithm below is that all right-rotated squares of a marked square are already reported.

Let us assume that we are searching for the leftmost occurrences of all squares whose periods are equal to~$p$.
Given the starting position~$s$ of a square returned by~\cite[Algorithm 1]{GusfieldS04},
we consider the square~$(s,2p)$ and its right-rotations as candidates of our list:
If $\bv{}[s] = 1$, then this square and its right-rotations have already been reported.
Otherwise, we report $(s,2p)$ if $\LPF[s] < 2p$.
In order to find the leftmost occurrences of all not yet reported right-rotated squares efficiently, 
we first compute the rightmost position~$e$ of the repetition of period~$p$ containing the square~$(s,2p)$ by an LCE query.
Second, we check the interval $I := [s+1..\min(s+p-1,e-2p+1)]$ for the starting positions of the squares whose \LPF{} values are less than $2p$.
To this end, we perform an RMQ query on \LPF{} to find the position~$j$ whose \LPF{} value is minimal in~$I$.
If $j > 2p$, then all squares with period~$p$ in the considered range have already been found, 
i.e., there is no leftmost occurrence of a square with the period~$p$ in this range.
Otherwise, we report~$(j,2p)$ and recursively search for the text position with the minimal \LPF{} value within the intervals~$[s+1..j-1]$ and~$[j+1..\min(s+p-1,e-2p+1)]$.
In overall, the time of the recursion is bounded by twice of the number of distinct squares starting in the interval~$I$, since a recursion step terminates if it could not report any square.

\begin{theorem}\label{thmDistinctSquares}
	Given an LCE data structure with \timeLCE{} access time and \LPF{},
	we can compute all distinct squares in $\Oh{n t_{\text{LCE}}+ \occ} = \Oh{n t_{\text{LCE}}}$ time,
	where \occ{} is the number of distinct squares.
\end{theorem}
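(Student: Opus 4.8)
The plan is to bound separately the two stages of the procedure: the generation of the seed squares (a leftmost covering set) and the \LPF{}/RMQ-driven expansion of each seed into all leftmost occurrences. For the first stage I would invoke the (modified) Gusfield--Stoye border scan: iterating over all periods~$p$ and probing every Lempel--Ziv border for squares of the two types of \Cref{lemLeftmostOcc} costs $\sum_{x=1}^{z}\timeLCE\tuple{\abs{f_x}+\abs{f_{x+1}}}=\Oh{n\timeLCE}$, and the $Z$-array skipping preserves this bound, since whenever the period exceeds $\abs{f_x}+\abs{f_{x+1}}$ the next relevant factor is found in constant time. This stage yields a leftmost covering set of size $\Oh{n}$, processed per period in the order (type~(a) before type~(b)) dictated by \cite[Lemma 7]{GusfieldS04}.

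The correctness core I would establish first is the \LPF{} characterization of leftmost occurrences: for a fixed period~$p$, the occurrence $(s,2p)$ is the \emph{leftmost} occurrence of its square if and only if $\LPF[s]<2p$. Indeed, $\LPF[s]\ge 2p$ means $T[s..s+2p-1]$ reoccurs at some $i<s$, so $(s,2p)$ is not leftmost, whereas $\LPF[s]<2p$ forbids any earlier occurrence of this length-$2p$ substring. This justifies reporting a candidate exactly when its \LPF{} value falls below $2p$, and it is what the recursive RMQ search over $I:=[s+1..\min(s+p-1,e-2p+1)]$ exploits: splitting $I$ at its \LPF{}-minimizer~$j$ reports $(j,2p)$ precisely when $\LPF[j]<2p$ and otherwise prunes the whole subrange.

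Completeness is the step I expect to be the main obstacle, and I would argue it as follows. Every distinct square has a unique leftmost occurrence $(k,2p)$, and by the leftmost covering property it is a right-rotation of some processed seed $(s,2p)$ with $s\le k$; in particular $(s,2p),\dots,(k,2p)$ all lie in one run of period~$p$. The key geometric fact is $k-s\le p-1$: otherwise $k-p\ge s\ge 1$ and $(k-p,2p)$ would, by periodicity of the run, equal the same square string and give an earlier occurrence, contradicting that $k$ is leftmost. Together with $k+2p-1\le e$ (as $(k,2p)$ lies in the run ending at~$e$) this places $k$ in $\{s\}\cup I$, so the seed test plus the RMQ recursion of the seed $(s,2p)$ report it. Soundness and the ``reported exactly once'' property I would then derive from the bit vector~$\bv{}$ and its invariant: marking every reported start position guarantees that all right-rotations of a marked square are already output, so a seed with $\bv{}[s]=1$ may be skipped and the fresh intervals~$I$ contain only unreported leftmost occurrences. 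Verifying that this invariant is preserved under the type-(a)-before-type-(b) processing order is the delicate bookkeeping to nail down.

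Finally, I would total the running time. The seed stage, together with the single LCE query computing~$e$ per fresh seed, contributes $\Oh{n\timeLCE}$, using that the covering set has $\Oh{n}$ elements. The RMQ recursion forms, for each seed, a binary tree whose internal nodes each report a distinct square and whose leaves are pruned calls; it therefore has $\Oh{1+\text{reports}}$ nodes of constant cost each, and summing over all seeds and periods gives $\Oh{\occ}$, since each distinct square is reported exactly once. This yields $\Oh{n\timeLCE+\occ}$. As $\occ\le 2n$ by \Cref{lemNumberSquaresLinear} and $\timeLCE\ge 1$, the additive $\occ$ term is absorbed, giving the claimed $\Oh{n\timeLCE}$.
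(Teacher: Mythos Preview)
Your proposal is correct and follows essentially the same approach as the paper: distinctness via the $\LPF$ test, completeness via the leftmost covering set together with right-rotation bounded by $p-1$ positions, and the running-time split $\Oh{n\timeLCE}+\Oh{\occ}$ with the latter absorbed by \Cref{lemNumberSquaresLinear}. The paper's own proof is considerably terser (deferring the recursion-cost argument to the preceding algorithm description and citing the Gusfield--Stoye guarantee $i-\ell/2<j\le i$ directly), whereas you re-derive that bound from the leftmost property of~$k$ and spell out the binary-tree accounting for the RMQ recursion; these are elaborations rather than a different route.
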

\begin{proof}
We show that the returned list is the list of all distinct squares.
No square occurs in the list twice since we only report the occurrence of a square $(i,\ell)$ if $\LPF[i] < \ell$.
Assume that there is a square missing in the list; let $(i,\ell)$ be its leftmost occurrence.
There is a square~$(j,\ell)$ reported by the (original) algorithm~\cite{GusfieldS04} such that $i - \ell/2 < j \le i$ and right-rotating~$(j,\ell)$ yields~$(i,\ell)$.
Since we right-rotate all found squares, we obviously have reported~$(j,\ell)$.

The \occ{} term in the running time is dominated by the $n t_{\text{LCE}}$ term due to \Cref{lemNumberSquaresLinear}.
\end{proof}

\section{Practical Evaluation}\label{secEval}
We have implemented the algorithm computing the leftmost occurrences of all squares in \CPlusPlus{}11~\cite{repo}.
The primary focus was on the execution time, rather than on a small memory footprint:
We have deliberately chosen plain 32-bit integer arrays for storing all array data structures like \SA{}, \LCP{} and \LPF{}.
These data structures are constructed as follows:
First, we generate \SA{} with {\tt divsufsort}~\cite{divsufsort}.
Subsequently, we generate \LCP{} with the $\Phi$-algorithm~\cite{kaerkkaeinen09permuted},
and \LPF{} with the simple algorithm of~\cite[Proposition 1]{Crochemore2009LCR}. %
Finally, we use the bit vector class and the RMQ data structure provided by the sdsl-lite library~\cite{gbmp2014sea}.
In practice, it makes sense to use an RMQ only for very large LCP values and periods (i.e., RMQs on \LPF{}) due to its long execution time.
For small values, we naively compared characters, or scanned \LPF{} linearly.

We ran the algorithm on all 200MiB collections of the Pizza\&Chili Corpus~\cite{PizzaChili}.
The Pizza\&Chili Corpus is divided in a real text corpus with the prefix \textsc{pc}, and in a repetitive corpus with the prefix \textsc{pcr}.
The experiments were conducted on a machine with 32 GB of RAM and an
Intel\textsuperscript{\textregistered} Xeon\textsuperscript{\textregistered} CPU~\texttt{E3-1271~v3}.
The operating system was a 64-bit version of Ubuntu Linux~\num{14.04} with the kernel version~3.13.
We used a single execution thread for the experiments.
The source code was compiled using the GNU compiler {\tt g++~6.2.0} with the compile flags {\tt -O3 -march=native -DNDEBUG}.

\Cref{tableEval} shows the running times of the algorithm on the described datasets.
It looks like that large factors tend to slow down the computation, since the algorithm has to check all periods up to
$\max_x (\abs{f_x}+\abs{f_{x+1}})$. This seems to have more impact on the running time than the number of Lempel-Ziv factors~$z$.

\begin{table}[t]
	\centerline{%
	{\footnotesize
		\begin{tabular}{l*{8}{r}}
			collection              & $\sigma$  & $\max_i \LCP[i]$   & $\textup{avg}_{\LCP{}}$        & $z$           & $\max_x \abs{f_x}$ & $\max_x \abs{f_xf_{x+1}}$ & $\abs{\occ}$   & time \\
			\toprule
\textsc{pc-dblp.xml    }  & \num{97}  & \num{1084}    & \num{44}     & \num{7035342}  & \num{1060}         & \num{1265}                       & \num{7412}     & \num{70}   \\
\textsc{pc-dna         }  & \num{17}  & \num{97979}   & \num{60}     & \num{13970040} & \num{97966}        & \num{97982}                      & \num{132594}   & \num{310}  \\
\textsc{pc-english     }  & \num{226} & \num{987770}  & \num{9390}   & \num{13971134} & \num{987766}       & \num{1094108}                    & \num{13408}    & \num{2639} \\
\textsc{pc-proteins    }  & \num{26}  & \num{45704}   & \num{278}    & \num{20875097} & \num{45703}        & \num{67809}                      & \num{3108339}  & \num{245}  \\
\textsc{pc-sources     }  & \num{231} & \num{307871}  & \num{373}    & \num{11542200} & \num{307871}       & \num{307884}                     & \num{339818}   & \num{792}  \\
\textsc{pcr-cere       }  & \num{6}   & \num{175655}  & \num{3541}   & \num{1446793}  & \num{175643}       & \num{185362}                     & \num{47081}    & \num{535}  \\
\textsc{pcr-einstein.en}  & \num{125} & \num{935920}  & \num{45983}  & \num{49575}    & \num{906995}       & \num{1634034}                    & \num{18192737} & \num{3953} \\
\textsc{pcr-kernel     }  & \num{161} & \num{2755550} & \num{149872} & \num{774532}   & \num{2755550}      & \num{2755556}                    & \num{9258}     & \num{6608} \\
\textsc{pcr-para       }  & \num{6}   & \num{72544}   & \num{2268}   & \num{1926563}  & \num{70680}        & \num{73735}                      & \num{37391}    & \num{265}   \\
			\bottomrule
		\end{tabular}
		}%
	}%
	\caption{Practical evaluation of the algorithm computing all distinct squares on the datasets described in \Cref{secEval}. Execution time is in seconds. It is the median of several conducted experiments, whose variance in time was small. We needed approx.\ 5.73 GB of RAM for each instance.
		The expression $\textup{avg}_{\LCP{}}$ is the average of all \LCP{} values, and $z$ is the number of Lempel-Ziv factors. }
	\label{tableEval}
\end{table}

\section{Decorating the Suffix Tree with All Squares}\label{secDecoration}
\citeauthor{GusfieldS04} described a representation of the set of all distinct squares by a decoration of the suffix tree, like the highlighted nodes (additionally annotated with its respective square) shown in the suffix tree of our running example below.
This representation asks for a set of tuples of the form (node, length) such that
each square~$T[i..i+\ell-1]$ is represented by a tuple~$(v,\ell)$, where $v$ is the highest node 
whose string label has $T[i..i+\ell-1]$ as a (not necessarily proper) prefix.

\begin{wrapfigure}{l}{.45\textwidth}
\begin{tikzpicture}[ST,
innernode/.style={rectangle,draw=gray!50,fill=gray!20,thick,minimum width=#1,minimum height=0.5cm,rounded corners=1mm,anchor=south},
]
\expandafter\newcommand\csname nodeType1\endcsname{innernode}
\expandafter\newcommand\csname nodeSize1\endcsname{8.2cm}
\expandafter\newcommand\csname nodeLabel1\endcsname{1}
\coordinate (node1) at (3.85,-0);
\node[innernode=\csname nodeSize1\endcsname] (node 0x11) at (node1) {1};
\expandafter\newcommand\csname nodeType2\endcsname{leafnode}
\expandafter\newcommand\csname nodeSize2\endcsname{0.5cm}
\expandafter\newcommand\csname nodeLabel2\endcsname{12}
\coordinate (node2) at (0,-0.8);
\node[leafnode=\csname nodeSize2\endcsname] (node 1x0) at (node2) {12};
\draw[->] (0,-0) -- (node 1x0.north);
\coordinate (char2p1) at (0,-0.16);
\node[charlabel] at (char2p1){\verb!$!};
\expandafter\newcommand\csname nodeType3\endcsname{innernode}
\expandafter\newcommand\csname nodeSize3\endcsname{4.7cm}
\expandafter\newcommand\csname nodeLabel3\endcsname{3}
\coordinate (node3) at (2.8,-0.8);
\node[innernode=\csname nodeSize3\endcsname] (node 1x8) at (node3) {3};
\draw[->] (2.8,-0) -- (node 1x8.north);
\coordinate (char3p1) at (2.8,-0.16);
\node[charlabel] at (char3p1){\verb!a!};
\expandafter\newcommand\csname nodeType4\endcsname{leafnode}
\expandafter\newcommand\csname nodeSize4\endcsname{0.5cm}
\expandafter\newcommand\csname nodeLabel4\endcsname{11}
\coordinate (node4) at (0.7,-1.6);
\node[leafnode=\csname nodeSize4\endcsname] (node 2x2) at (node4) {11};
\draw[->] (0.7,-0.8) -- (node 2x2.north);
\coordinate (char4p1) at (0.7,-0.96);
\node[charlabel] at (char4p1){\verb!$!};
\expandafter\newcommand\csname nodeType5\endcsname{innernode}
\expandafter\newcommand\csname nodeSize5\endcsname{1.2cm}
\expandafter\newcommand\csname nodeLabel5\endcsname{5}
\coordinate (node5) at (1.75,-1.6);
\node[innernode=\csname nodeSize5\endcsname] (node 2x5) at (node5) {5};
\draw[->] (1.75,-0.8) -- (node 2x5.north);
\coordinate (char5p1) at (1.75,-0.96);
\node[charlabel] at (char5p1){\verb!a!};
\expandafter\newcommand\csname nodeType6\endcsname{leafnode}
\expandafter\newcommand\csname nodeSize6\endcsname{0.5cm}
\expandafter\newcommand\csname nodeLabel6\endcsname{5}
\coordinate (node6) at (1.4,-2.4);
\node[leafnode=\csname nodeSize6\endcsname] (node 3x4) at (node6) {5};
\draw[->] (1.4,-1.6) -- (node 3x4.north);
\coordinate (char6p1) at (1.4,-1.76);
\node[charlabel] at (char6p1){\verb!a!};
\expandafter\newcommand\csname nodeType7\endcsname{leafnode}
\expandafter\newcommand\csname nodeSize7\endcsname{0.5cm}
\expandafter\newcommand\csname nodeLabel7\endcsname{6}
\coordinate (node7) at (2.1,-2.4);
\node[leafnode=\csname nodeSize7\endcsname] (node 3x6) at (node7) {6};
\draw[->] (2.1,-1.6) -- (node 3x6.north);
\coordinate (char7p1) at (2.1,-1.76);
\node[charlabel] at (char7p1){\verb!b!};
\expandafter\newcommand\csname nodeType8\endcsname{innernode}
\expandafter\newcommand\csname nodeSize8\endcsname{2.6cm}
\expandafter\newcommand\csname nodeLabel8\endcsname{8}
\coordinate (node8) at (3.85,-2.4);
\node[innernode=\csname nodeSize8\endcsname] (node 3x11) at (node8) {8};
\draw[->] (3.85,-0.8) -- (node 3x11.north);
\coordinate (char8p1) at (3.85,-0.96);
\node[charlabel] at (char8p1){\verb!b!};
\draw[draw=none] (char8p1) circle [trienode];
\coordinate (char8p2) at (3.85,-1.68);
\node[charlabel] at (char8p2){\verb!a!};
\expandafter\newcommand\csname nodeType9\endcsname{leafnode}
\expandafter\newcommand\csname nodeSize9\endcsname{0.5cm}
\expandafter\newcommand\csname nodeLabel9\endcsname{9}
\coordinate (node9) at (2.8,-3.2);
\node[leafnode=\csname nodeSize9\endcsname] (node 4x8) at (node9) {9};
\draw[->] (2.8,-2.4) -- (node 4x8.north);
\coordinate (char9p1) at (2.8,-2.56);
\node[charlabel] at (char9p1){\verb!$!};
\expandafter\newcommand\csname nodeType10\endcsname{leafnode}
\expandafter\newcommand\csname nodeSize10\endcsname{0.5cm}
\expandafter\newcommand\csname nodeLabel10\endcsname{3}
\coordinate (node10) at (3.5,-3.2);
\node[leafnode=\csname nodeSize10\endcsname] (node 4x10) at (node10) {3};
\draw[->] (3.5,-2.4) -- (node 4x10.north);
\coordinate (char10p1) at (3.5,-2.56);
\node[charlabel] at (char10p1){\verb!a!};
\expandafter\newcommand\csname nodeType11\endcsname{innernode}
\expandafter\newcommand\csname nodeSize11\endcsname{1.2cm}
\expandafter\newcommand\csname nodeLabel11\endcsname{11}
\coordinate (node11) at (4.55,-4);
\node[innernode=\csname nodeSize11\endcsname] (node 5x13) at (node11) {11};
\draw[->] (4.55,-2.4) -- (node 5x13.north);
\coordinate (char11p1) at (4.55,-2.56);
\node[charlabel] at (char11p1){\verb!b!};
\draw[draw=none] (char11p1) circle [trienode];
\coordinate (char11p2) at (4.55,-3.28);
\node[charlabel] at (char11p2){\verb!a!};
\expandafter\newcommand\csname nodeType12\endcsname{leafnode}
\expandafter\newcommand\csname nodeSize12\endcsname{0.5cm}
\expandafter\newcommand\csname nodeLabel12\endcsname{7}
\coordinate (node12) at (4.2,-4.8);
\node[leafnode=\csname nodeSize12\endcsname] (node 6x12) at (node12) {7};
\draw[->] (4.2,-4) -- (node 6x12.north);
\coordinate (char12p1) at (4.2,-4.16);
\node[charlabel] at (char12p1){\verb!$!};
\expandafter\newcommand\csname nodeType13\endcsname{leafnode}
\expandafter\newcommand\csname nodeSize13\endcsname{0.5cm}
\expandafter\newcommand\csname nodeLabel13\endcsname{1}
\coordinate (node13) at (4.9,-4.8);
\node[leafnode=\csname nodeSize13\endcsname] (node 6x14) at (node13) {1};
\draw[->] (4.9,-4) -- (node 6x14.north);
\coordinate (char13p1) at (4.9,-4.16);
\node[charlabel] at (char13p1){\verb!a!};
\expandafter\newcommand\csname nodeType14\endcsname{innernode}
\expandafter\newcommand\csname nodeSize14\endcsname{2.6cm}
\expandafter\newcommand\csname nodeLabel14\endcsname{14}
\coordinate (node14) at (6.65,-1.6);
\node[innernode=\csname nodeSize14\endcsname] (node 2x19) at (node14) {14};
\draw[->] (6.65,-0) -- (node 2x19.north);
\coordinate (char14p1) at (6.65,-0.16);
\node[charlabel] at (char14p1){\verb!b!};
\draw[draw=none] (char14p1) circle [trienode];
\coordinate (char14p2) at (6.65,-0.88);
\node[charlabel] at (char14p2){\verb!a!};
\expandafter\newcommand\csname nodeType15\endcsname{leafnode}
\expandafter\newcommand\csname nodeSize15\endcsname{0.5cm}
\expandafter\newcommand\csname nodeLabel15\endcsname{10}
\coordinate (node15) at (5.6,-2.4);
\node[leafnode=\csname nodeSize15\endcsname] (node 3x16) at (node15) {10};
\draw[->] (5.6,-1.6) -- (node 3x16.north);
\coordinate (char15p1) at (5.6,-1.76);
\node[charlabel] at (char15p1){\verb!$!};
\expandafter\newcommand\csname nodeType16\endcsname{leafnode}
\expandafter\newcommand\csname nodeSize16\endcsname{0.5cm}
\expandafter\newcommand\csname nodeLabel16\endcsname{4}
\coordinate (node16) at (6.3,-2.4);
\node[leafnode=\csname nodeSize16\endcsname] (node 3x18) at (node16) {4};
\draw[->] (6.3,-1.6) -- (node 3x18.north);
\coordinate (char16p1) at (6.3,-1.76);
\node[charlabel] at (char16p1){\verb!a!};
\expandafter\newcommand\csname nodeType17\endcsname{innernode}
\expandafter\newcommand\csname nodeSize17\endcsname{1.2cm}
\expandafter\newcommand\csname nodeLabel17\endcsname{17}
\coordinate (node17) at (7.35,-3.2);
\node[innernode=\csname nodeSize17\endcsname] (node 4x21) at (node17) {17};
\draw[->] (7.35,-1.6) -- (node 4x21.north);
\coordinate (char17p1) at (7.35,-1.76);
\node[charlabel] at (char17p1){\verb!b!};
\draw[draw=none] (char17p1) circle [trienode];
\coordinate (char17p2) at (7.35,-2.48);
\node[charlabel] at (char17p2){\verb!a!};
\expandafter\newcommand\csname nodeType18\endcsname{leafnode}
\expandafter\newcommand\csname nodeSize18\endcsname{0.5cm}
\expandafter\newcommand\csname nodeLabel18\endcsname{8}
\coordinate (node18) at (7,-4);
\node[leafnode=\csname nodeSize18\endcsname] (node 5x20) at (node18) {8};
\draw[->] (7,-3.2) -- (node 5x20.north);
\coordinate (char18p1) at (7,-3.36);
\node[charlabel] at (char18p1){\verb!$!};
\expandafter\newcommand\csname nodeType19\endcsname{leafnode}
\expandafter\newcommand\csname nodeSize19\endcsname{0.5cm}
\expandafter\newcommand\csname nodeLabel19\endcsname{2}
\coordinate (node19) at (7.7,-4);
\node[leafnode=\csname nodeSize19\endcsname] (node 5x22) at (node19) {2};
\draw[->] (7.7,-3.2) -- (node 5x22.north);
\coordinate (char19p1) at (7.7,-3.36);
\node[charlabel] at (char19p1){\verb!a!};

  \ChangeNodeLabel{5}{mast}{5,\texttt{aa}}
  \ChangeNodeLabel{17}{mast}{17,\texttt{baba}}
  \ChangeNodeLabel{11}{mast}{11,\texttt{abab}}

\end{tikzpicture}
\vspace{-1.5em}
\end{wrapfigure}
We show that we can compute this set of tuples in linear time by applying the Phase~II algorithm described in \Cref{secDistinctSquares} to our computed set of all distinct squares.
The Phase~II algorithm takes a list~$L_i$ storing squares starting at text position~$i$, for each $1 \le i \le n$.
Each of these lists has to be sorted in descending order with respect to the squares' lengths.
It is easy to adapt our algorithm to produce these lists:
On reporting a square~$(i,\ell)$, we insert it at the front of~$L_i$.
By doing so, we can fill the lists \emph{without} sorting, since we iterate over the period length in the outer loop, 
while we iterate over all Lempel-Ziv factors in the inner loop.

Finally, we can conduct Phase~II\@.
In the original version, the goal of Phase~II was to decorate the suffix tree with the endpoints of a subset of the original leftmost covering set.
We will show that performing exactly the same operations with the set of the leftmost occurrences of all squares
will decorate the suffix tree with all squares directly.
In more detail, we first augment the suffix tree leaf having label $i$ with the list~$L_i$, for each $1 \le i \le n$.
Subsequently, we follow~\citet{GusfieldS04} by processing every node of the suffix tree with a bottom-up traversal.
During this traversal we propagate the lists of squares from the leaves up to the root:
An internal node~$u$ inherits the list of the child whose subtree contains the leaf with the smallest label among all leaves in the subtree rooted at~$u$.
If the edge to the parent node contains the ending position of one or more squares in the list (these candidates are stored at the front of the list), 
we decorate the edge with these squares, and pop them off from the list.
By~\cite[Theorem 8]{GusfieldS04}, there is no square of the set~$L'$ (defined in \Cref{secDistinctSquares}) neglected during the bottom-top traversal.
The same holds if we exchange~$L'$ with our computed set of all distinct squares:

\begin{Lemma}
	By feeding the algorithm of Phase~II with the above constructed lists~$L_i$ containing the leftmost occurrences of the squares starting at the text position~$i$,
	it will decorate the suffix tree with all distinct squares.
\end{Lemma}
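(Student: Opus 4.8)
The plan is to verify that the single structural property Phase~II relies on, namely that every input square sits at the leaf of its leftmost occurrence, is satisfied by our set, so that the correctness argument behind~\cite[Theorem 8]{GusfieldS04} transfers essentially verbatim; I would then separately check completeness (every distinct square is decorated) and soundness (nothing spurious is decorated). The only work beyond citing Gusfield and Stoye is to confirm that their smallest-label routing argument uses nothing about $L'$ except that its squares are leftmost occurrences.

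First I would fix an arbitrary square $(i,\ell)$ in our set and recall that, by the construction of \Cref{thmDistinctSquares}, its starting position~$i$ is the leftmost occurrence of the string $T[i..i+\ell-1]$. Let $v$ be the node at which this square should be recorded, i.e.\ the highest node on the root-to-leaf-$i$ path whose string depth is at least~$\ell$; equivalently, the endpoint at string depth~$\ell$ lies on the edge entering~$v$. The heart of the argument is the observation that the leaves in the subtree of~$v$ are exactly the starting positions~$k$ of the occurrences of $T[i..i+\ell-1]$, because every such suffix $T[k..n]$ has the string label of~$v$ (which begins with the square) as a prefix, and conversely. Hence the leftmost-occurrence property is equivalent to the statement that $i$ is the smallest label among all leaves in the subtree of~$v$.

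Next I would trace the bottom-up propagation. Since the inheritance rule keeps, at every node, the list of the child whose subtree holds the smallest leaf label, and since every intermediate node~$u$ on the path from leaf~$i$ up to~$v$ has a subtree contained in that of~$v$ while still containing leaf~$i$, the smallest label in the subtree of~$u$ is forced to be~$i$. Therefore $L_i$ is carried unchanged from leaf~$i$ all the way up to~$v$. Because the lists are sorted by descending length and a square is popped exactly when the crossed edge contains its endpoint, the entry $(i,\ell)$ surfaces at the front of~$L_i$ precisely when we cross the edge into~$v$, so it is popped and used to decorate~$v$ with the tuple $(v,\ell)$, as required. Soundness is immediate: we only ever insert genuine leftmost occurrences into the lists and only decorate with popped list entries, so no square is recorded at a wrong node or invented.

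Completeness then follows by combining the two directions: \Cref{thmDistinctSquares} guarantees that our set already contains the leftmost occurrence of every distinct square, and the previous paragraph shows that each such occurrence is routed to and recorded at its correct locus without being dropped at a merge. The step I expect to be the main obstacle is exactly this routing argument, i.e.\ proving that a leftmost-occurrence square is never discarded when two lists meet at an internal node; this is where the equivalence between being a leftmost occurrence and being the smallest label in the locus subtree does all the work. I would take care to argue that popping in descending-length order aligns monotonically with the decreasing string depth encountered while climbing, so that the endpoint of each square is tested against the unique edge that contains it, leaving no gap and no double counting.
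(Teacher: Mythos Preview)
Your proposal is correct and follows essentially the same approach as the paper: both hinge on the observation that a leftmost occurrence~$(i,\ell)$ has~$i$ equal to the smallest leaf label in the subtree of its locus~$v$, so the list~$L_i$ is never discarded at any merge on the path from leaf~$i$ to~$v$. The paper phrases this as a short contradiction (assuming a square survives in a discarded sibling list and deriving an earlier occurrence), whereas you trace the routing constructively from the leaf upward; the underlying argument is the same, and your version spells out the soundness and the alignment of descending length with decreasing string depth a bit more explicitly than the paper does.
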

\begin{proof}
We adapt the algorithm of \Cref{secDistinctSquares} to build the lists~$L_i$.
These lists contain the leftmost occurrences of all squares.
	In the following we show that no square is left out during the bottom-up traversal.
	Let us take a suffix tree node~$u$ with its children~$v$ and~$w$.
	Without loss of generality, assume that the smallest label among all leaves contained in the subtree of~$v$ is smaller than 
	the label of every leaf contained in $w$'s subtree.
	For the sake of contradiction, assume that the list of~$w$ contains the occurrence of a square~$(i, \ell)$ at the time when we pass the list of~$v$ to its parent~$u$.
	The length $\ell$ is smaller than $v$'s string depth, otherwise it would already have been popped off from the list. 
	But since $v$'s subtree contains a leaf whose label~$j$ is the smallest among all labels contained in the subtree of~$w$,
	the square occurs before at $T[j..j+\ell-1] = T[i..i+\ell-1]$, a contradiction to the distinctness.
\end{proof}

This concludes the correctness of the modified algorithm. We immediately get:

\begin{theorem}
Given \LPF{}, an LCE data structure on the reversed text, and the suffix tree of~$T$, 
	we can decorate the suffix tree with all squares of the text in $\Oh{n \timeLCE}$ time.
	Asides from these data structures, we use $(\occ + n) \lg n + z \lg z + \min(n + \oh{n}, z \lg n) + \Oh{\lg n}$ bits of working space.
\end{theorem}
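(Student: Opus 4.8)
The plan is to read this as a composition result: run the list-producing variant of the all-distinct-squares algorithm of \Cref{secDistinctSquares} to build, for every position~$i$, the list~$L_i$ of leftmost square occurrences starting at~$i$, and then feed these lists to the Phase~II decoration algorithm. For the time bound I would first invoke \Cref{thmDistinctSquares}: producing all leftmost occurrences costs $\Oh{n \timeLCE}$, where the forward LCE queries are answered by the given suffix tree of~$T$ (via lowest-common-ancestor / string-depth queries) and the backward ones by the given LCE structure on the reversed text, so \timeLCE{} is the maximum of the two query times. As noted in \Cref{secDecoration}, inserting each reported square~$(i,\ell)$ at the front of~$L_i$ while the outer loop runs over increasing periods keeps every~$L_i$ sorted in descending length order at no extra cost. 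Phase~II is then a single bottom-up traversal: each internal node inherits the list of the child whose subtree holds the smallest leaf label and pops off the squares whose endpoints fall on the edge to its parent. This visits $\Oh{n}$ nodes and pops each of the $\occ$ squares exactly once, and since $\occ = \Oh{n}$ by \Cref{lemNumberSquaresLinear}, Phase~II runs in $\Oh{n}$ time; the total is therefore $\Oh{n \timeLCE} + \Oh{n} = \Oh{n \timeLCE}$.

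Correctness I would simply inherit from the preceding lemma, which already shows that feeding Phase~II with the lists of leftmost occurrences of all squares decorates the suffix tree with all distinct squares: the contradiction argument there guarantees that the child not chosen for inheritance always has an empty list at the moment of the merge, so no square is ever discarded.

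For the space bound I would account structure by structure, treating the three supplied objects (\LPF{} together with its range-minimum support, the reversed-text LCE structure, and the suffix tree) as input rather than working space. The lists~$L_i$ store exactly $\occ$ length entries together with $n$ list heads, and the produced decoration contributes one (node, length) pair per distinct square; these total $(\occ + n)\lg n$ bits, a term that also absorbs every auxiliary structure of size $\Oh{n}$ bits — in particular the per-period marking bit vector~\bv{} of length~$n$ and any range-minimum support built on \LPF{} — because $(\occ + n)\lg n \ge n \lg n$ dominates any term of $\Oh{n}$ bits. The factor-skipping array~$Z$ of \Cref{secDistinctSquares} accounts for $z \lg z$ bits. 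The Lempel--Ziv factorization itself is kept either as an array of $z$ factor starts ($z \lg n$ bits) or as the length-$n$ marked bit vector with select support ($n + \oh{n}$ bits) of \Cref{corLZ}, whichever is smaller, yielding the $\min(n + \oh{n},\, z \lg n)$ term; the remaining counters and bookkeeping fit within $\Oh{\lg n}$ bits.

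The substance of the argument lives entirely in the two results I am composing, so the only delicate points are bookkeeping. I expect the main obstacle to be the space accounting: one must confirm that the list construction needs no separate sorting pass (this rests squarely on the period-outer / factor-inner loop order that makes front-insertion order-preserving) and that every $\Oh{n}$-bit helper is either reused across periods or dominated by the $n \lg n$ term, so that the stated sum is not exceeded. The time and correctness parts, by contrast, reduce to quoting \Cref{thmDistinctSquares}, \Cref{lemNumberSquaresLinear}, and the preceding lemma.
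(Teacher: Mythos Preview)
Your proposal is correct and follows essentially the same route as the paper: compose \Cref{thmDistinctSquares} with the Phase~II traversal, use the suffix tree for forward LCE via lowest common ancestors, and account for the lists~$L_i$, the skip array~$Z$, and the Lempel--Ziv representation of \Cref{corLZ} to obtain the stated space bound. Your write-up is in fact more explicit than the paper's own proof, which only sketches the space breakdown and does not spell out that the $\Oh{n}$-bit auxiliaries (the per-period bit vector~\bv{}, the RMQ on \LPF{}) are dominated by the $n\lg n$ term.
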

\begin{proof}
	We need $(\occ + n) \lg n$ bits for storing the lists~$L_i$ ($\occ \lg n$ bits for storing the lengths of all squares in an integer array, and $n \lg n$ bits for the pointers to the first element of each list).
An LCE query on the text can be answered by the string depth of a lowest common ancestor in the suffix tree; most representations can answer string depth and lowest ancestor queries in constant time.
The array~$Z$ uses $z \lg z$ bits.
The Lempel-Ziv factors are represented as in~\Cref{corLZ}.
\end{proof}

\begin{Corollary}
We can compute the suffix tree and decorate it with all squares of the text in \Oh{n/\epsilon} time using
$(3n+ \occ +2n\epsilon) \lg n + z \lg z + \Oh{n}$ bits, for a constant $0 < \epsilon \le 1$.
\end{Corollary}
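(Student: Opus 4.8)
The plan is to supply, within the stated budget, the three objects that the preceding theorem takes as given --- the array \LPF{}, an LCE data structure on the reversed text $T^{\mathrm{R}}$, and the suffix tree of $T$ --- and then to invoke that theorem verbatim; the corollary is then just an exercise in summing the construction costs, the only genuine care being the \emph{peak} space, since these objects are built in sequence.

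First I would build the suffix tree of $T$ with the $(1+\epsilon)n\lg n + \Oh{n}$-bit, $\Oh{n/\epsilon}$-time construction that underlies \Cref{lemLPFmy} (the implementation of~\cite{mnk17,LiLH16a}). This one structure plays two roles at once: it answers LCE queries on $T$ in constant time via the string depth of a lowest common ancestor, and it supports the leaf augmentation, bottom-up traversal, edge decoration, and string-depth queries used in Phase~II. On top of this very tree I would run the LPF algorithm of \Cref{lemLPFmy}: beyond the already-present tree its only cost is the marking vector \bv{V} and the output, both $\Oh{n}$ bits, and I would emit the result directly in the succinct form of \Cref{corLPFrepresentation}, so that after discarding \bv{V} the array \LPF{} persists in $2n+\oh{n}=\Oh{n}$ bits.

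Next I would construct an LCE data structure on $T^{\mathrm{R}}$ by running the same suffix-array--based construction on the reversed text, once more in $\Oh{n/\epsilon}$ time and $(1+\epsilon)n\lg n + \Oh{n}$ bits; with the compressed suffix array giving $\Oh{1/\epsilon}$ inverse-suffix-array access and a constant-time RMQ on the \LCP{} of $T^{\mathrm{R}}$, each reversed-text query costs $\timeLCE = \Oh{1/\epsilon}$. With \LPF{}, the suffix tree of $T$, and the $T^{\mathrm{R}}$ structure all present, I would invoke the preceding theorem: it produces the lists $L_i$ and decorates the tree in $\Oh{n \timeLCE}=\Oh{n/\epsilon}$ time, using an additional $(\occ+n)\lg n + z\lg z + \min(n+\oh{n},z\lg n)+\Oh{\lg n}$ bits, with the Lempel--Ziv factors stored as in \Cref{corLZ}.

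It remains to add everything up. The two persistent suffix structures account for $2(1+\epsilon)n\lg n$ bits, the lists for $(\occ+n)\lg n$ (the square lengths plus $n$ list heads), the skip array $Z$ for $z\lg z$, while the succinct \LPF{}, the Lempel--Ziv representation, all the $\Oh{n}$-bit marking and RMQ vectors, and the lower-order terms collapse into $\Oh{n}$; since $2(1+\epsilon)n\lg n + (\occ+n)\lg n = (3n+\occ+2n\epsilon)\lg n$, this is exactly the claimed $(3n+\occ+2n\epsilon)\lg n + z\lg z + \Oh{n}$. The step I expect to demand the most care is the peak-space verification: I must check that computing \LPF{} first never exceeds the decoration-time peak (it does not, because its transient tree is the same $(1+\epsilon)n\lg n$-bit tree already charged to the persistent structures), and that one single $(1+\epsilon)n\lg n$-bit representation of the suffix tree of $T$ really does support simultaneously the climb of \Cref{lemLPFmy}, constant-time LCE on $T$, and every Phase~II operation.
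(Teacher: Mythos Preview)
Your proposal is correct and follows essentially the same approach as the paper: instantiate the preceding theorem by building the forward suffix tree and \LPF{} via \Cref{lemLPFmy} in $(1+\epsilon)n\lg n + \Oh{n}$ bits, build a second $(1+\epsilon)n\lg n$-bit structure (SA, ISA, LCP with RMQ) on the reversed text for backward LCE queries, add the RMQ on \LPF{}, and sum to $(3n+\occ+2n\epsilon)\lg n + z\lg z + \Oh{n}$. Your explicit discussion of peak space is more careful than the paper's terse accounting; the only minor slip is claiming constant-time forward LCE---with the sampled \ISA{} of the $(1+\epsilon)n\lg n$ representation, locating a leaf by its label already costs $\Oh{1/\epsilon}$, so forward LCE is $\Oh{1/\epsilon}$ as well, but this is absorbed into the $\Oh{n/\epsilon}$ bound anyway.
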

\begin{proof}
We use~\Cref{lemLPFmy} to store \SA{}, \ISA{}, \LCP{}, and \LPF{} in $(1+\epsilon) n \lg n + \Oh{n}$ bits.
Subsequently, we build an RMQ data structure on \LCP{} such that LCE queries can be answered in constant time.
	We additionally need the suffix array, its inverse, and the LCP array (with an RMQ data structure) of the reversed text to answer LCE queries on the reversed text.
	Finally, we equip \LPF{} with an RMQ data structure for the right-rotations.
\end{proof}

\JO{%
The values in the lists (i.e., the lengths of the squares starting at a specific position) can be stored in Elias-Fano coding~\cite{Fano,Elias}.
If the list~$L_i$ stores $m_i$ elements, then $2\occ + \sum_{i=1}^n \left( m_i \upgauss{\lg (n/m_i)} \right) + \oh{\occ}$ bits are needed to represent the content of all lists.
It is easy to implement the popping of the first value from a list with this representation, given that we store an offset value and a pointer to the current beginning of each list.
}%

\begin{algorithm}[t]
\DontPrintSemicolon %
$\B{f}$ denotes the left end of a factor~$f = T[\B{f}..\B{f}+\abs{f}-1]$, \lcp{} and \lcs{} compute the LCE in~$T$ and the LCE in the reverse of~$T$ (mirroring the input indices by $i \mapsto n-i$ for $1 \le i \le n-1$), respectively. \;
Let $f_1,\ldots,f_z$ be the factors of the Lempel-Ziv factorization \;
$f_{z+1} \gets T[n]$ \tcp*{dummy factor}

\function{\textup{recursive-rotate}($s$ : \textup{starting position}, $e$: \textup{ending position})}{%
			$m \gets \LPF.\textit{RMQ}[s..e]$ \;
			\lIf{$m > 2p$}{\textbf{return}}
			{report($m, 2p$)} and $\bv{}[m] \gets 1$ \;
			recursive-rotate($s$,$m-1$) and recursive-rotate($m+1$,$e$) \;
}

\function{\textup{right-rotate}($s : $ \textup{starting position of square}, $p$: \textup{period of square})}{%
			\lIf{$\bv{}[s] = 1$}{\textbf{return}}
			\lIf{$\LPF[s] < 2p$}{%
			report($s, 2p$) and $\bv{}[s] \gets 1$
			}
			$\ell \gets \lcp[s, s+p]$ \;
			recursive-rotate($s+1,s+p-1,s+\ell-p$) \;
}

    $Z \gets \text{array of size } z \lg z \text{~bits, zero initialized}$ \;
    $m \gets \max(\abs{f_1}+\abs{f_2},\ldots,\abs{f_{z-1}}+\abs{f_z})$ \;
	\For{$p = 1,\ldots,m$}{%
	$\bv{} \gets \text{bit vector of length~} n$, zero initialized \;
	\For{$x = 1,\ldots,z$}{%
	\If{$\abs{f_x}+\abs{f_{x+1}} < p$}{%
		$y \gets x$ \;
		\While{$\abs{f_y}+\abs{f_{y+1}} < p$}{%
			\lIf{$Z[y] \not= 0$}{%
			$y \gets Z[y]$
			}
			\lElse{\textbf{incr } $y$}
		}
		$Z[x] \gets y$ and
		$x \gets y$ \;
	}%

	\If(\tcp*[h]{probe for squares satisfying \Cref{lemLeftmostOcc}(\ref{itLeftmostOcc1})}){$\abs{f_x} \ge p$}{%
		$q \gets \B{f_{x+1}} - p$ \;
		$\ell_{\textup{R}} \gets \lcp[\B{f_{x+1}}, q]$ and
		$\ell_{\textup{L}} \gets \lcs[\B{f_{x+1}}-1, q-1]$ \;
		\If(\tcp*[h]{found a square of length $2p$ with its right end in $f_{x+1}$}){$\ell_{\textup{R}} + \ell_{\textup{L}} \ge p$ and $\ell_{\textup{R}} > 0$}{%
			$s \gets \max(q - \ell_{\textup{L}}, q - p +1)$ \tcp{square starts at~$s$} 
			right-rotate($s,p$) \;
		}%
	}%
	$q \gets \B{f_x} + p$ \tcp*[h]{probe for squares satisfying \Cref{lemLeftmostOcc}(\ref{itLeftmostOcc2})} \;
	$\ell_{\textup{R}} \gets \lcp[\B{f_x}, q]$ and
	$\ell_{\textup{L}} \gets \lcs[\B{f_x}-1, q-1]$ \;
	$s \gets \max(\B{f_x} - \ell_{\textup{L}}, \B{f_x} - p + 1)$ \tcp{square starts in a factor preceding $f_x$} 
	\If(\tcp*[h]{found a square of length $2p$ whose center is in $f_x$}){$\ell_{\textup{R}} + \ell_{\textup{L}} \ge p$ and $\ell_{\textup{R}} > 0$ and $s + p \le \B{f_{x+1}}$ and $\ell_{\textup{L}} > 0$}{%
			right-rotate($s,p$) \;
	}%
}%
}%
\caption{Modified Algorithm~1 of~\cite{GusfieldS04}}
\label{algo1}
\end{algorithm}

As an application, we consider the common squares problem:
Given a set of non-empty strings with a total length~$n$, we want to find all squares that occur in every string in \Oh{n} time.
We solve this problem by first decorating the generalized suffix tree built on all strings with the distinct squares of all strings.
Subsequently, we apply the \Oh{n} time solution of \citet{Hui92} that annotates each internal suffix tree node~$v$ with the number of strings that contain $v$'s string label.
This solves our problem since we can simply report all squares corresponding to nodes whose string labels are found in all strings.
This also solves the problem asking for the longest common square of all strings in \Oh{n} time, analogously to the longest common substring problem~\cite{gusfield97algorithms}.

Finally, the last section is dedicated to another application of our suffix tree decoration:
\section{Computing the Tree Topology of the MAST in Linear Time}\label{secMAST}
\JO{%
\begin{wrapfigure}{l}{.45\textwidth}
\begin{tikzpicture}[ST,
innernode/.style={rectangle,draw=gray!50,fill=gray!20,thick,minimum width=#1,minimum height=0.5cm,rounded corners=1mm,anchor=south},
]
\expandafter\newcommand\csname nodeType1\endcsname{innernode}
\expandafter\newcommand\csname nodeSize1\endcsname{8.2cm}
\expandafter\newcommand\csname nodeLabel1\endcsname{1}
\coordinate (node1) at (3.85,-0);
\node[innernode=\csname nodeSize1\endcsname] (node 0x11) at (node1) {1};
\expandafter\newcommand\csname nodeType2\endcsname{leafnode}
\expandafter\newcommand\csname nodeSize2\endcsname{0.5cm}
\expandafter\newcommand\csname nodeLabel2\endcsname{12}
\coordinate (node2) at (0,-0.8);
\node[leafnode=\csname nodeSize2\endcsname] (node 1x0) at (node2) {12};
\draw[->] (0,-0) -- (node 1x0.north);
\coordinate (char2p1) at (0,-0.16);
\node[charlabel] at (char2p1){\verb!$!};
\expandafter\newcommand\csname nodeType3\endcsname{innernode}
\expandafter\newcommand\csname nodeSize3\endcsname{4.7cm}
\expandafter\newcommand\csname nodeLabel3\endcsname{3}
\coordinate (node3) at (2.8,-0.8);
\node[innernode=\csname nodeSize3\endcsname] (node 1x8) at (node3) {3};
\draw[->] (2.8,-0) -- (node 1x8.north);
\coordinate (char3p1) at (2.8,-0.16);
\node[charlabel] at (char3p1){\verb!a!};
\expandafter\newcommand\csname nodeType4\endcsname{leafnode}
\expandafter\newcommand\csname nodeSize4\endcsname{0.5cm}
\expandafter\newcommand\csname nodeLabel4\endcsname{11}
\coordinate (node4) at (0.7,-1.6);
\node[leafnode=\csname nodeSize4\endcsname] (node 2x2) at (node4) {11};
\draw[->] (0.7,-0.8) -- (node 2x2.north);
\coordinate (char4p1) at (0.7,-0.96);
\node[charlabel] at (char4p1){\verb!$!};
\expandafter\newcommand\csname nodeType5\endcsname{innernode}
\expandafter\newcommand\csname nodeSize5\endcsname{1.2cm}
\expandafter\newcommand\csname nodeLabel5\endcsname{5}
\coordinate (node5) at (1.75,-1.6);
\node[innernode=\csname nodeSize5\endcsname] (node 2x5) at (node5) {5};
\draw[->] (1.75,-0.8) -- (node 2x5.north);
\coordinate (char5p1) at (1.75,-0.96);
\node[charlabel] at (char5p1){\verb!a!};
\expandafter\newcommand\csname nodeType6\endcsname{leafnode}
\expandafter\newcommand\csname nodeSize6\endcsname{0.5cm}
\expandafter\newcommand\csname nodeLabel6\endcsname{5}
\coordinate (node6) at (1.4,-2.4);
\node[leafnode=\csname nodeSize6\endcsname] (node 3x4) at (node6) {5};
\draw[->] (1.4,-1.6) -- (node 3x4.north);
\coordinate (char6p1) at (1.4,-1.76);
\node[charlabel] at (char6p1){\verb!a!};
\expandafter\newcommand\csname nodeType7\endcsname{leafnode}
\expandafter\newcommand\csname nodeSize7\endcsname{0.5cm}
\expandafter\newcommand\csname nodeLabel7\endcsname{6}
\coordinate (node7) at (2.1,-2.4);
\node[leafnode=\csname nodeSize7\endcsname] (node 3x6) at (node7) {6};
\draw[->] (2.1,-1.6) -- (node 3x6.north);
\coordinate (char7p1) at (2.1,-1.76);
\node[charlabel] at (char7p1){\verb!b!};
\expandafter\newcommand\csname nodeType8\endcsname{innernode}
\expandafter\newcommand\csname nodeSize8\endcsname{2.6cm}
\expandafter\newcommand\csname nodeLabel8\endcsname{8}
\coordinate (node8) at (3.85,-2.4);
\node[innernode=\csname nodeSize8\endcsname] (node 3x11) at (node8) {8};
\draw[->] (3.85,-0.8) -- (node 3x11.north);
\coordinate (char8p1) at (3.85,-0.96);
\node[charlabel] at (char8p1){\verb!b!};
\draw[draw=none] (char8p1) circle [trienode];
\coordinate (char8p2) at (3.85,-1.68);
\node[charlabel] at (char8p2){\verb!a!};
\expandafter\newcommand\csname nodeType9\endcsname{leafnode}
\expandafter\newcommand\csname nodeSize9\endcsname{0.5cm}
\expandafter\newcommand\csname nodeLabel9\endcsname{9}
\coordinate (node9) at (2.8,-3.2);
\node[leafnode=\csname nodeSize9\endcsname] (node 4x8) at (node9) {9};
\draw[->] (2.8,-2.4) -- (node 4x8.north);
\coordinate (char9p1) at (2.8,-2.56);
\node[charlabel] at (char9p1){\verb!$!};
\expandafter\newcommand\csname nodeType10\endcsname{leafnode}
\expandafter\newcommand\csname nodeSize10\endcsname{0.5cm}
\expandafter\newcommand\csname nodeLabel10\endcsname{3}
\coordinate (node10) at (3.5,-3.2);
\node[leafnode=\csname nodeSize10\endcsname] (node 4x10) at (node10) {3};
\draw[->] (3.5,-2.4) -- (node 4x10.north);
\coordinate (char10p1) at (3.5,-2.56);
\node[charlabel] at (char10p1){\verb!a!};
\expandafter\newcommand\csname nodeType11\endcsname{innernode}
\expandafter\newcommand\csname nodeSize11\endcsname{1.2cm}
\expandafter\newcommand\csname nodeLabel11\endcsname{11}
\coordinate (node11) at (4.55,-4);
\node[innernode=\csname nodeSize11\endcsname] (node 5x13) at (node11) {11};
\draw[->] (4.55,-2.4) -- (node 5x13.north);
\coordinate (char11p1) at (4.55,-2.56);
\node[charlabel] at (char11p1){\verb!b!};
\draw[draw=none] (char11p1) circle [trienode];
\coordinate (char11p2) at (4.55,-3.28);
\node[charlabel] at (char11p2){\verb!a!};
\expandafter\newcommand\csname nodeType12\endcsname{leafnode}
\expandafter\newcommand\csname nodeSize12\endcsname{0.5cm}
\expandafter\newcommand\csname nodeLabel12\endcsname{7}
\coordinate (node12) at (4.2,-4.8);
\node[leafnode=\csname nodeSize12\endcsname] (node 6x12) at (node12) {7};
\draw[->] (4.2,-4) -- (node 6x12.north);
\coordinate (char12p1) at (4.2,-4.16);
\node[charlabel] at (char12p1){\verb!$!};
\expandafter\newcommand\csname nodeType13\endcsname{leafnode}
\expandafter\newcommand\csname nodeSize13\endcsname{0.5cm}
\expandafter\newcommand\csname nodeLabel13\endcsname{1}
\coordinate (node13) at (4.9,-4.8);
\node[leafnode=\csname nodeSize13\endcsname] (node 6x14) at (node13) {1};
\draw[->] (4.9,-4) -- (node 6x14.north);
\coordinate (char13p1) at (4.9,-4.16);
\node[charlabel] at (char13p1){\verb!a!};
\expandafter\newcommand\csname nodeType14\endcsname{innernode}
\expandafter\newcommand\csname nodeSize14\endcsname{2.6cm}
\expandafter\newcommand\csname nodeLabel14\endcsname{14}
\coordinate (node14) at (6.65,-1.6);
\node[innernode=\csname nodeSize14\endcsname] (node 2x19) at (node14) {14};
\draw[->] (6.65,-0) -- (node 2x19.north);
\coordinate (char14p1) at (6.65,-0.16);
\node[charlabel] at (char14p1){\verb!b!};
\draw[draw=none] (char14p1) circle [trienode];
\coordinate (char14p2) at (6.65,-0.88);
\node[charlabel] at (char14p2){\verb!a!};
\expandafter\newcommand\csname nodeType15\endcsname{leafnode}
\expandafter\newcommand\csname nodeSize15\endcsname{0.5cm}
\expandafter\newcommand\csname nodeLabel15\endcsname{10}
\coordinate (node15) at (5.6,-2.4);
\node[leafnode=\csname nodeSize15\endcsname] (node 3x16) at (node15) {10};
\draw[->] (5.6,-1.6) -- (node 3x16.north);
\coordinate (char15p1) at (5.6,-1.76);
\node[charlabel] at (char15p1){\verb!$!};
\expandafter\newcommand\csname nodeType16\endcsname{leafnode}
\expandafter\newcommand\csname nodeSize16\endcsname{0.5cm}
\expandafter\newcommand\csname nodeLabel16\endcsname{4}
\coordinate (node16) at (6.3,-2.4);
\node[leafnode=\csname nodeSize16\endcsname] (node 3x18) at (node16) {4};
\draw[->] (6.3,-1.6) -- (node 3x18.north);
\coordinate (char16p1) at (6.3,-1.76);
\node[charlabel] at (char16p1){\verb!a!};
\expandafter\newcommand\csname nodeType17\endcsname{innernode}
\expandafter\newcommand\csname nodeSize17\endcsname{1.2cm}
\expandafter\newcommand\csname nodeLabel17\endcsname{17}
\coordinate (node17) at (7.35,-3.2);
\node[innernode=\csname nodeSize17\endcsname] (node 4x21) at (node17) {17};
\draw[->] (7.35,-1.6) -- (node 4x21.north);
\coordinate (char17p1) at (7.35,-1.76);
\node[charlabel] at (char17p1){\verb!b!};
\draw[draw=none] (char17p1) circle [trienode];
\coordinate (char17p2) at (7.35,-2.48);
\node[charlabel] at (char17p2){\verb!a!};
\expandafter\newcommand\csname nodeType18\endcsname{leafnode}
\expandafter\newcommand\csname nodeSize18\endcsname{0.5cm}
\expandafter\newcommand\csname nodeLabel18\endcsname{8}
\coordinate (node18) at (7,-4);
\node[leafnode=\csname nodeSize18\endcsname] (node 5x20) at (node18) {8};
\draw[->] (7,-3.2) -- (node 5x20.north);
\coordinate (char18p1) at (7,-3.36);
\node[charlabel] at (char18p1){\verb!$!};
\expandafter\newcommand\csname nodeType19\endcsname{leafnode}
\expandafter\newcommand\csname nodeSize19\endcsname{0.5cm}
\expandafter\newcommand\csname nodeLabel19\endcsname{2}
\coordinate (node19) at (7.7,-4);
\node[leafnode=\csname nodeSize19\endcsname] (node 5x22) at (node19) {2};
\draw[->] (7.7,-3.2) -- (node 5x22.north);
\coordinate (char19p1) at (7.7,-3.36);
\node[charlabel] at (char19p1){\verb!a!};

 \draw[mast] (char8p1) circle [trienode] node [trielabel] {4};
 \ChangeNodeLabel{14}{mast}{4}
 \ChangeNodeLabel{3}{mast}{7}

 \ChangeNodeLabel{1}{}{7}
 \ChangeNodeLabel{5}{}{1}
 \ChangeNodeLabel{8}{}{2}
 \ChangeNodeLabel{11}{}{2}
 \ChangeNodeLabel{17}{}{2}
 \ChangeNodeLabel{1}{}{}

\end{tikzpicture}
\vspace{-1.5em}
\end{wrapfigure}
}%
A modification of the suffix tree is the \intWort{minimal augmented suffix tree (MAST)}~\cite{ApostolicoP96}.
This tree can answer the number of the non-overlapping occurrences of a substring in~$T$.
\JO{%
To this end, it adds some nodes on the unary paths of the suffix tree, 
and augments each internal node with the number of the non-overlapping occurrences of its string label, like in the left tree
(the leaves are shown with their suffix number, each leaf represents a substring that occurs exactly once).
The newly created nodes obey the property that the stored numbers of the MAST nodes on the path from a leaf to the root are strictly increasing.
Given a pattern of length~$m$, the MAST can answer the number of the non-overlapping occurrences of the pattern in \Oh{m} time.
To this end, we traverse the MAST from the root downwards while reading the pattern from the edge labels.
If there is a mismatch, the pattern cannot be found in the text.
Otherwise, we end at reading the label of an edge~$(u,v)$; let $u$ be $v$'s parent.
Then the node~$v$ is the highest node whose string label has the pattern as a (not necessarily strict) prefix.
By returning the number stored in $v$ we are done, since this number is the number of non-overlapping occurrences of the pattern in the text.
}%
The MAST can be built in \Oh{n \lg n} time~\cite{solvingstringstatisticsologn}.

In this section, we show how to compute the tree topology of the MAST in linear time.
The topology of the MAST differs to the suffix tree topology by the fact that the root of each square is the string label of a MAST node.
Our goal is to compute a list storing the information about where to insert the missing nodes.
The list stores tuples consisting of a node~$v$ and a length~$\ell$; 
we use this information later to create a new node~$w$ splitting the edge~$(u,v)$ into $(u,w)$ and $(w,v)$,
where $u$ is the (former) parent of~$v$. 
We will label~$(w,v)$ with the last~$\ell$ characters and $(u,v)$ with the rest of the characters of the edge label of $(u,v)$.

To this end, we explore the suffix tree with a top-down traversal while locating the roots of the squares in the order of their lengths.
To locate the roots of the squares in linear time we use two data structures.
The first one is a semi-dynamic lowest marked ancestor data structure~\cite{GabowT85}.
It allows marking a node and querying for the lowest marked ancestor of a node in constant amortized time.
We will use it to mark the area in the suffix tree that has already been processed for finding the roots of the squares.

The second data structure is the list of tuples of the form (node, length) computed in \Cref{secDecoration}, 
where each tuple~$(v,\ell)$ consists of the length~$\ell$ of a square~$T[i..i+\ell-1]$ and the highest suffix tree node~$v$ whose string label
has $T[i..i+\ell-1]$ as a (not necessarily proper) prefix.
We sort this list, which we now call~$L$, with respect to the square lengths with a linear time integer sorting algorithm.

Finally, we explain the algorithm locating the roots of all squares.
We successively process all tuples of~$L$, starting with the shortest square length.
Given a tuple of~$L$ containing the node~$v$ and the length~$\ell$, we 
want to split an edge on the path from the root to~$v$ and insert a new node whose string depth is $\ell/2$.
To this end, we compute the lowest marked ancestor~$u$ of $v$.
If $u$'s string depth is smaller than $\ell/2$,
we mark all descendants of $u$ whose string depths are smaller than $\ell/2$, and additionally the children of those nodes
(this can be done by a DFS or a BFS).
If we query for the lowest marked ancestor of~$u$ again, we get an ancestor~$w$
whose string depth is at least $\ell/2$, and whose parent has a string depth less than $\ell/2$.
We report $w$ and the subtraction of $\ell/2$ from $w$'s string depth (if $\ell/2$ is equal to the string depth of~$w$, then $w$'s string label is equal to the root of $v$'s string label, i.e., we do not have to report it).

\JO{If the suffix tree has a pointer-based representation, it is easy to add the new nodes by splitting each edge~$(u,v)$,
where~$v$ is a node contained in the output list.
}%
\begin{theorem}
We can compute the tree topology of the MAST in linear time using linear number of words.
\end{theorem}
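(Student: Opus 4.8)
The plan is to reduce the task to the suffix-tree decoration already produced in \Cref{secDecoration}. First I would build the suffix tree of $T$ (in $\Oh{n}$ time for the integer alphabet) together with the list of tuples $(v,\ell)$ that decorate it with all distinct squares; here $v$ is the highest suffix-tree node whose string label has the square $T[i..i+\ell-1]$ as a prefix, and $\ell$ is the even length of that square. Since the MAST topology differs from the suffix-tree topology exactly by carrying a node at string depth $\ell/2$ for the root of every square, it suffices to determine, for each such tuple, the edge of the suffix tree that must be split to place a node of string depth $\ell/2$ on the path from the root to $v$. I would collect the tuples into a list $L$ and sort it by the square length $\ell$ with a linear-time integer sort; this is affordable because $\ell \in [1..n]$ and $\abs{L} = \occ = \Oh{n}$ by \Cref{lemNumberSquaresLinear}.

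The core of the algorithm processes the tuples of $L$ in non-decreasing order of $\ell$, using a marked region to record which loci of string depth below the current threshold have already been resolved. I keep this marking in the semi-dynamic lowest-marked-ancestor structure of \cite{GabowT85}, which supports marking a node and querying the lowest marked proper ancestor of a node in $\Oh{1}$ amortised time. For a tuple $(v,\ell)$ I first query the lowest marked ancestor $u$ of $v$. If the string depth of $u$ is still below $\ell/2$, I run a depth-first traversal from $u$ that marks every not-yet-marked descendant of $u$ of string depth below $\ell/2$ together with their children, so that the lower boundary of the marked region on this branch descends to depth $\ell/2$. Querying again for the lowest marked node on the path to $v$ then returns the node $w$ of smallest string depth at least $\ell/2$ (its parent has string depth below $\ell/2$). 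This $w$ is exactly the locus below which the root of length $\ell/2$ sits: if its string depth equals $\ell/2$ the corresponding MAST node already exists in the suffix tree and nothing is reported; otherwise I report the pair consisting of $w$ and the difference between the string depth of $w$ and $\ell/2$, encoding that the edge entering $w$ must be split so that the newly created node has string depth exactly $\ell/2$.

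For correctness I would argue two points. First, locating $w$ is correct: because string depth is monotone along any root-to-leaf path, the nodes of string depth below $\ell/2$ form a connected subtree containing the root, and the first node of string depth at least $\ell/2$ on the path to $v$ is precisely its boundary on that branch; the decoration guarantees that the square, hence its root of length $\ell/2$, is a prefix of the string label of $v$, so this root genuinely lies on the path from the root to $v$ at depth $\ell/2$, and $w$ is where it must be inserted. Second, distinctness of the squares means each root $S$ is a distinct string and therefore corresponds to a unique locus at depth $\abs{S}$, so no insertion is requested twice and the reported splits are exactly the MAST-specific nodes; applying them to the suffix-tree topology yields the MAST topology.

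The running-time analysis is where I expect the main difficulty, namely bounding the total cost of the marking traversals. The crucial observation is that $\ell$ is non-decreasing, so the marked region only grows: a node is interior-marked (its string depth falls below the current threshold $\ell/2$) at most once, and is boundary-marked (as a child of an interior node) at most once, over the entire run, giving $\Oh{n}$ marking operations. The traversal launched from $u$ must recurse only into nodes it newly interior-marks and stop as soon as it meets an already-marked node or a node of string depth at least $\ell/2$; each edge it walks can then be charged to the unique promotion of its lower endpoint. To make this charging valid I would maintain the invariant that the interior-marked nodes form a connected subtree containing the root, which holds because the marking proceeds top-down from $u$ and $\ell$ never decreases; verifying this invariant together with the edge-charging is the delicate part of the argument. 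Granting it, the total traversal cost is $\Oh{n}$; combined with the $\Oh{1}$ amortised marked-ancestor operations over $\Oh{\occ}=\Oh{n}$ tuples, the linear-time sort, and the linear-time construction and decoration of the suffix tree, the whole procedure runs in $\Oh{n}$ time and uses a linear number of words, as claimed.
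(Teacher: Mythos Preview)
Your proposal is correct and follows exactly the paper's algorithm: sort the decoration tuples by square length and use the semi-dynamic lowest-marked-ancestor structure of \cite{GabowT85} together with an incrementally growing top-down marking to locate, for each tuple, the edge on which the square root sits. Your running-time accounting via node promotions (each suffix-tree node is boundary-marked at most once and interior-marked at most once, and the connected-subtree invariant guarantees a DFS from $u$ meets only unmarked nodes) is more explicit than the paper's one-line proof, which instead charges the visits of a node to the number of MAST nodes to be inserted on the edge to its parent plus one; both arguments give the same $\Oh{n+\occ}=\Oh{n}$ bound.
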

\begin{proof}
	By using the semi-dynamic lowest marked ancestor data structure, 
	we visit a node as many times as we have to insert nodes on the edge to its parent, plus one.
	This gives $\Oh{n + 2 \occ} = \Oh{n}$ time.
\end{proof}

\JO[\block{Open Problems}]{\subsection*{Open Problems}}
It is left open to compute the number of the non-overlapping occurrences of the string labels of the MAST nodes in linear time.
Since RMQ data structures are practically slow, we wonder whether we can avoid the use of any RMQ without loosing linear running time.

\JO[\block{Acknowledgements}]{subsection*{Acknowledgements}}
This work was mainly done during a visit at the Kyushu University in Japan,
supported by the \emph{Japan Society for the Promotion of Science~(JSPS)}.
We thank Thomas Schwentick for the question whether we can run our algorithm online, for which we provided a solution in \Cref{secOnline}.

\bibliographystyle{abbrvnat}
\bibliography{ref}
\newpage
\appendix
\section{Observations}
In~\cite[Line 6 of Algorithm 1b]{GusfieldS04}, the condition $start + k < h_1$ has to be changed to $start + k \le h_1$.
Otherwise, the algorithm would find in $T = {\tt abaabab\$}$ only the square {\tt aa}, but not {\tt abaaba}.

\section{Online Variant}\label{secOnline}
In this section, we consider the \emph{online} setting, where new characters are appended to the end of the text~$T$.
Given the text $T[1..i]$ up to position~$i$ with the Lempel-Ziv factorization~$f_1 \cdots f_y = T$, we consider computing the set of all distinct squares of $f_1 \cdots f_{y-2}$,
i.e., up to the last two Lempel-Ziv factors. 
\newcommand{\timeTrie}[1]{\min\tuple{\lg^2 \lg #1 / \lg \lg \lg #1, \sqrt{\lg #1 /\lg \lg #1}}}
For this setting, we show that we can compute the set of all distinct squares in \Oh{n \timeTrie{n}} time using \Oh{n} words of space.
To this end, we adapt the algorithm of \Cref{thmDistinctSquares} to the online setting.
We need an algorithm computing \LPF{} online, and a semi-dynamic LCE data structure 
(answering LCE queries on the text \emph{and} on the reversed text while supporting appending characters to the text).

The main idea of our solution is to build suffix trees with two online suffix tree construction algorithms.
The first is \citeauthor{ukkonen95online}'s algorithm that computes the suffix tree online in \Oh{n \Time{nav}} time~\cite{ukkonen95online},
		where \Time{nav} is the time for inserting a node and navigating (in particular, selecting the child on the edge starting with a specific character).
	We can adapt this algorithm to compute \LPF{} online:
	Assume that we have computed the suffix tree of $T[1..i-1]$.
	The algorithm processes the new character $T[i]$ by (1) taking the suffix links of the current suffix tree, 
	and (2) adding new leaves where a branching occurs.
	On adding a new leaf with suffix number~$i$, we additionally set $\LPF[i]$ to the string depth of its parent.
	By doing so, we can update the \LPF{} values in time linear to the update time of the suffix tree.
	We build the semi-dynamic RMQ data structure of \citet{fischer11inducing} (or of \cite{UekiDKMNYBIS17} if $n$ is known beforehand) on top of \LPF{}.
	This data structure takes \Oh{n} words and can perform query and appending operations in constant amortized time.

The second suffix tree construction algorithm is a modified version~\cite{BlumerBHECS85} of
\citeauthor{weiner73linear}'s algorithm~\cite{weiner73linear} that builds the suffix tree in the reversed order of \citeauthor{ukkonen95online}'s algorithm in \Oh{n \Time{nav}} time.
Since Weiner's algorithm incrementally constructs the suffix tree of a given text
from right to left, 
we can adapt this algorithm to compute the suffix tree of the reversed text online in \Oh{n \Time{nav}} time.

To get a suffix tree construction time of \Oh{n \timeTrie{n}}, we use the predecessor data structure of~\citet{BeameF02}.
We create a predecessor data structure to store the children of each suffix tree node, 
such that we get the navigation time $\Time{nav} = \Oh{\timeTrie{n}}$ for both suffix trees.
We also create a predecessor data structure to store the out-going suffix links
of each node of the suffix tree constructed by Weiner's algorithm.
Overall, these take a total of \Oh{n} words of space.
	
Finally, our last ingredient is a dynamic lowest common ancestor data structure with \Oh{n} words that performs querying and modification operations in constant time~\cite{cole05dynamic}.
The lowest common ancestor of two suffix tree leaves with the labels~$i$ and~$j$ is the node whose string depth is equal to the longest common extension of~$T[i..]$ and~$T[j..]$, where $T[i..]$ denotes the $i$-th suffix (up to the last position that is available in the online setting).
	Building this data structure on the suffix tree of the text~$T$ and on the suffix tree of the reversed text allows us to compute LCE queries in both directions in constant time.

We adapt the algorithm of \Cref{secDistinctSquares} by switching the order of the loops (again).
The algorithm first fixes a Lempel-Ziv factor~$f_x$ and then searches for squares with a period between one and $\abs{f_x}+\abs{f_{x+1}}$.
Unfortunately, we would need an extra bit vector for each period so that we can track all found leftmost occurrences.
Instead, we use the predecessor data structure of~\cite{BeameF02} storing the found occurrences of squares as pairs of starting positions and lengths.
These pairs can be stored in lexicographic order (first sorted by starting position, then by length).
The predecessor data structure will contain at most~\occ{} elements, hence takes $\Oh{\occ} = \Oh{n}$ words of space.
An insertion and or a search costs us \Oh{\timeTrie{n}} time.

Let us assume that we have computed the set for $T[1..i-1]$, and that the Lempel-Ziv factorization of $T[1..i-1]$ is $f_1 \cdots f_y$.
If appending a new character~$T[i]$ will result in a new factor~$f_{y+1}$, we check
for squares of type \Cref{lemLeftmostOcc}(\ref{itLeftmostOcc1}) and \Cref{lemLeftmostOcc}(\ref{itLeftmostOcc2}) at the borders of $f_{y-1}$.
Duplicates are filtered by the predecessor data structure storing all already reported leftmost occurrences.
The algorithm outputs only the leftmost occurrences with the aid of \LPF{}, whose entries are fixed up to the last two factors 
(this is sufficient since we search for the starting position of the leftmost occurrence of a square with type \Cref{lemLeftmostOcc}(\ref{itLeftmostOcc1}) only in $T[1..\abs{f_1\cdots f_{y-1}}]$, including right-rotations).
In overall, we need $\Oh{(\abs{f_{y-1}}+\abs{f_{y}}) \timeTrie{n}}$ time.

The current bottleneck of the online algorithm is the predecessor data structure in terms of the running time.
Future integer dictionary data structures can improve the overall performance of this algorithm.

\section{Algorithm Execution with one Step at a Time}
In this section, we process the running example $T = ababaaababa\$$ with the algorithm devised in \Cref{secDistinctSquares} step by step.
\SA{}, \LCP{}, \PLCP{}, and \LPF{} are given in the table below:

	\begin{center}
		\tabcolsep=0.1em
			\begin{tabular}{l*{15}{C{1.8em}}}
					$i$        & 1  & 2  & 3 & 4 & 5 & 6 & 7 & 8 & 9  & 10 & 11 & 12
					\\\toprule
					$T$        & {\tt a}  & {\tt b}  & {\tt a} & {\tt b} & {\tt a} & {\tt a} & {\tt a} & {\tt b} & {\tt a}  & {\tt b}  & {\tt a}  & {\tt \$}
					\\\midrule
					$\SA$      & 12 & 11 & 5 & 6 & 9 & 3 & 7 & 1 & 10 & 4  & 8  & 2
					\\\midrule
					$\LCP$     & 0  & 0  & 1 & 2 & 1 & 3 & 3 & 5 & 0  & 2  & 2  & 4
					\\\midrule
					$\PLCP$    & 5  & 4  & 3 & 2 & 1 & 2 & 3 & 2 & 1  & 0  & 0  & 0
					\\\midrule
					$\LPF$     & 0  & 0  & 3 & 2 & 1 & 2 & 5 & 4 & 3  & 2  & 1  & 0
					\\\midrule
					LZ & 
					\multicolumn{1}{c|}{$f_1$}
					&
					\multicolumn{1}{|c|}{$f_2$}
					&
					\multicolumn{3}{|c|}{$f_3$}
					&
					\multicolumn{2}{|c|}{$f_4$}
					&
					\multicolumn{4}{|c|}{$f_5$}
					&
					\multicolumn{1}{|c|}{$f_6$}
					\\\bottomrule
				\end{tabular}
	\end{center}

			\noindent
			The text 
			$T = $
	\RubyReset{}%
	\RubyCount{a}|%
	\RubyCount{b}|%
	\RubyCount{aba}|%
	\RubyCount{aa}|%
	\RubyCount{baba}|%
	\RubyCount{\$} $= f_1 \cdots f_6$
			is factorized in six Lempel-Ziv factors.
			We call $T[1+\abs{f_1 \cdots f_{i-1}}]$ (first position of the $i$-th factor) and 
			$T[1+\abs{f_1 \cdots f_i}]$ (position after the $i$-th factor) the \intWort{left border} and the \intWort{right border} of $f_i$, respectively.
			The idea of the algorithm is to check the presence of a square at a factor border and at an offset value~$q$ of the border with LCE queries.
			$q$ is either the \emph{addition} of $p$ to the \emph{left} border, or the \emph{subtraction} of $p$ from the \emph{right} border.

			\noindent
			The algorithm finds the leftmost occurrences of all squares in the order (first) of their lengths and (second) of their starting positions.
We start with the period~$p=1$ and try to detect squares at each Lempel-Ziv factor border.
To this end, we create a bit vector~\bv{} marking all found squares with period~$p$.
A square of this period is found at the right border of $f_3$.
It is of type \Cref{lemLeftmostOcc}(\ref{itLeftmostOcc1}), since its starting position is in $f_3$.
To find it, we take the right border~$b = 6$ of $f_3$, and the position~$q := b-p = 5$.
We perform an LCE query at~$b$ and~$q$ in the forward and backward direction.
Only the forward query returns the non-zero value of one.
But this is sufficient to find the square {\tt aa} of period one.
Its \LPF{} value is smaller than $2p=2$, so it is the leftmost occurrence.
It is not yet marked in~\bv{}, thus we have not yet reported it.
Right-rotations are not necessary for period~$1$.
Having found all squares with period~$1$, we clear \bv{}.

Next, we search for squares with period~$2$.
We find a square of type \Cref{lemLeftmostOcc}(\ref{itLeftmostOcc2}) at the left border~$b = 2$ of $f_2$.
To this end, we perform an LCE query starting from $b$ and $q := b+p = 4$ in both directions.
Both LCE queries show that $T[1..5]$ is a repetition with period~$p=2$.
Thus we know that $T[1..4]$ is a square.
It is not yet marked in~\bv{}, and has an \LPF{} value smaller than $2p = 4$, i.e., it is a not yet reported leftmost occurrence.
On finding a leftmost occurrence of a square, we right-rotate it, and report all right-rotations whose \LPF{} values are below $2p$.
This is the case for $T[2..5]$, which is the leftmost occurrence of the square {\tt baba}.

After some unsuccessful checks at the next factor borders, we come to factor~$f_5$ and search for a square of type \Cref{lemLeftmostOcc}(\ref{itLeftmostOcc2}).
Two LCE queries in both directions at the left border~$b = 8$ of $f_5$ and $q := b+p = 10$ reveal that $T[7..11]$ is a repetition of period~$2$.
The substring~$T[7..10]$ is a square, but its \LPF{} value is $5 (\ge 2p)$, i.e., we have already reported this square.
Although we have already reported it, some right-rotation of it might not have been reported yet (see~\Cref{secWhyRMQ} for an example).
This time, all right-rotations (i.e., $T[8..12]$) have an \LPF{} value $\ge 2p$, i.e., 
there is no leftmost occurrence of a square of period~$2$ found by right-rotations.
In overall, we have found and reported the leftmost occurrences of all squares \emph{once}.

\section{Need for RMQ on LPF}\label{secWhyRMQ}
In \Cref{secDistinctSquares}, we perform the right-rotations of a square~$(s,2p)$ with an RMQ on the interval $I := [s+1..\min(s+p-1,e-2p+1)]$, where
$e$ is the last position of the maximal repetition of period~$p$ that contains the square. 
Instead of an RMQ, we can linearly scan all \LPF{} values in $I$, giving $\Oh{p} = \Oh{n}$ time.
We cannot do better since the \LPF{} values can be arbitrary.
For instance, consider the text $T = {\tt abaaabaababaaabaaa\$}$.
The text aligned with \LPF{} is shown in the table below.

\noindent
\begin{center}
	\tabcolsep=0.1em
		\begin{tabular}{l*{19}{R{1.7em}}}
				$i$        & 1 & 2 & 3 & 4 & 5 & 6 & 7 & 8 & 9 & 10 & 11 & 12 & 13 & 14 & 15 & 16 & 17 & 18 & 19
				\\\toprule
				$T$        & {\tt a} & {\tt b} & {\tt a} & {\tt a} & {\tt a} & {\tt b} & {\tt a} & {\tt a} & {\tt b} & {\tt a}  & {\tt b}  & {\tt a}  & {\tt a}  & {\tt a}  & {\tt b}  & {\tt a}  & {\tt a}  & {\tt a}  & {\tt \$}
				\\\midrule
				$\LPF$     & 0 & 0 & 1 & 2 & 4 & 3 & 4 & 3 & 2 & 8  & 7  & 6  & 5  & 5  & 4  & 3  & 2  & 1  & 0
				\\\bottomrule
			\end{tabular}
\end{center}

\noindent
The square {\tt abaaabaa} has two occurrences starting at the positions~1 and~10.
The square {\tt baaabaaa} at position~11 is found by right-rotating the occurrence of {\tt abaaabaa} at position~10.
It is found by a linear scan over \LPF{} or an RMQ on \LPF{}.
A slight modification of this example can change the \LPF{} values around this occurrence. 
This shows that we cannot perform a shortcut in general (like stopping the search when the \LPF{} value is at least twice as large as~$p$).

\section{More Evaluation}

\begin{table}[H]
	\centerline{%
		\begin{tabular}{l*{5}{r}}
collection               & 1MiB  & 10MiB & 50MiB  & 100MiB  & 200MiB \\
			\toprule
			\textsc{pc-dblp.xml    } & 0.2 & 3  & 16  & 33   & \num{70}   \\
			\textsc{pc-dna         } & 0.3 & 3  & 23  & 56   & \num{310}  \\
			\textsc{pc-english     } & 0.2 & 5  & 42  & 500  & \num{2639} \\
			\textsc{pc-proteins    } & 0.3 & 4  & 25  & 74   & \num{245}  \\
			\textsc{pc-sources     } & 0.2 & 3  & 31  & 286  & \num{792}  \\
			\textsc{pcr-cere       } & 0.6 & 6  & 30  & 79   & \num{535}  \\
			\textsc{pcr-einstein.en} & 0.4 & 12 & 83  & 1419 & \num{3953} \\
			\textsc{pcr-kernel     } & 0.2 & 8  & 233 & 1274 & \num{6608} \\
			\textsc{pcr-para       } & 0.4 & 4  & 26  & 98   & \num{265}   \\
			\bottomrule
		\end{tabular}
	}%
	\caption{Running times in seconds, evaluated on different input sizes. We took prefixes of 1MiB, 10MiB, and 100MiB of all collections.}
	\label{tableEvalSizes}
\end{table}

\section{Pseudo Code}

\end{document}